\documentclass[10pt,final]{IEEEtran}

\usepackage{float}
\usepackage{algorithm}
\usepackage[noend]{algorithmic}
\usepackage[geometry]{ifsym}
\usepackage{bbm}
\usepackage{amssymb}
\usepackage{yhmath}
\usepackage{amscd}
\usepackage{dsfont}
\usepackage{amsmath}
\usepackage{graphicx}
\usepackage{psfrag}
\usepackage{rotating}
\usepackage{amsthm}
\usepackage{amsfonts}
\usepackage{url}
\usepackage{color}
\usepackage{balance} 
\usepackage{mathtools}
\usepackage{tabularx}
\usepackage{booktabs}
\usepackage{enumitem} 
\usepackage[labelfont=bf,format=plain,justification=raggedright,singlelinecheck=false]{caption}

\newtheorem{lemma}{Lemma}

\newtheorem{theorem}{Theorem}

\newtheorem{proposition}{Proposition}

\newtheorem{corollary}{Corollary}
\newtheorem{remark}{Remark}

\newcommand{\blue}[1]{#1}

\usepackage{bm}
\usepackage{placeins}
\providecommand{\Hel}{\mathrm{H}}
\providecommand{\Tr}{\operatorname{Tr}}
\providecommand{\DF}{\mathrm{DF}}
\newenvironment{bl}{}{}
\newcommand{\vphi}{\varphi}

\newcommand{\Kb}{\bar K}
\newcommand{\Ct}{\tilde C_2}
\newcommand{\Th}{\Theta_{\mathrm{eff}}}

\begin{document}
\title{Time-Switching Energy-Harvesting Quantum Relay-Assisted FSO Communication with Displaced Squeezed Signaling\thanks{A preliminary version of this work was presented at IEEE QCNC 2026 \cite{KhalfetQCNC2026}. This manuscript substantially extends that work.}}

 \author{Nizar Khalfet, \IEEEmembership{Member,~IEEE,} Francisco A. Monteiro, \IEEEmembership{Member,~IEEE,} and Ioannis Krikidis, \IEEEmembership{Fellow,~IEEE}
\thanks{N. Khalfet and I. Krikidis are with Department of Electrical and Computer Engineering, University of Cyprus, Cyprus (e-mail: \{khalfet.nizar,~krikidis\}@ucy.ac.cy).}
\thanks{F.~A.~Monteiro~is~with~Instituto~de~Telecommunica\c{c}\~{o}es,~and~\mbox{ISCTE-Instituto} Universit\'{a}rio de Lisboa, Portugal (e-mail: francisco.monteiro@lx.it.pt). }

}
\maketitle


\begin{abstract}
\begin{bl}
This paper investigates a quantum relay-assisted free-space optical (FSO) communication system in which a decode-and-forward relay is powered by time-switching optical energy harvesting and forwards the decoded information using displaced squeezed signaling (DSS). To obtain a physically consistent description of turbulence-induced loss for non-classical signaling, each hop is modeled as a pure-loss bosonic channel whose random transmissivity follows a truncated Gamma-Gamma law, i.e., the Gamma-Gamma irradiance conditioned on the physically admissible range of a transmittance. Under this channel, coherent states remain pure, whereas squeezed states are mapped to mixed Gaussian states. Under Helstrom-optimal detection, the average error of the coherent hop is derived in exact Laplace-series form. For the DSS hop, we derive the output states, a closed-form Gaussian fidelity with rigorous two-sided bounds on the Helstrom error, and prove the existence of a loss-aware optimal photon partition. Furthermore, we prove that optimized squeezing preserves both the turbulence-induced diversity order and the leading asymptotic constant of coherent relaying, so that its benefit is a finite-photon-budget effect, and that an interior throughput-optimal harvesting fraction exists. A covariance-preserving  benchmark with an exact series representation yields an analytical upper bound on the achievable throughput whose absolute gap vanishes as the relay photon budget grows. Numerical results based on the exact mixed-state Helstrom error quantify the benefit of squeezing and the regime in which it is useful.
\end{bl}
\end{abstract}

\begin{IEEEkeywords}
Quantum optical communications, free-space optics (FSO), Helstrom bound, time-switching relaying, energy harvesting, Gamma-Gamma turbulence, decode-and-forward (DF) relay.
\end{IEEEkeywords}

\section{Introduction}

Quantum communication exploits the fundamental principles of quantum
mechanics to enable information transmission beyond the limits imposed
by classical communication systems \blue{\cite{b1,Hanzo2025}}. In contrast to classical channels,
quantum channels allow the transmission of both classical and quantum
information while exploiting uniquely quantum resources such as \blue{entanglement and}
non-orthogonal quantum states \blue{\cite{b1,Hanzo2025,SchumacherWestmoreland}}.
These properties fundamentally modify the information-theoretic
limits of communication systems and introduce new performance
metrics governed by the laws of quantum detection and estimation
theory \blue{\cite{HelstromBook,b14,Pirandola2017}}.

Within continuous-variable (CV) \cite{Braunstein2005}  quantum communication systems,
optical coherent states constitute the most widely adopted signaling
format due to their practical compatibility with existing photonic
technologies and coherent detection receivers
\blue{\cite{Weedbrook2012}}. However, coherent-state
discrimination is fundamentally limited by quantum noise,
particularly in photon-limited regimes where the overlap between
the quantum signal states becomes significant.

To mitigate this limitation, recent research has investigated
non-classical optical states that exploit additional quantum
resources beyond coherent displacement. Among these, displaced
squeezed states (DSS) have emerged as a promising signaling format
that combines coherent displacement with quadrature squeezing
\cite{Chesi2018}. By redistributing quantum noise between
orthogonal quadratures while preserving the Heisenberg uncertainty
relation, squeezing modifies the geometry of the quantum signal
constellation and improves state distinguishability under optimal
quantum detection \blue{\cite{Weedbrook2012,Chesi2018,Paris2001}}.

Free-space optical (FSO) communication constitutes a key physical
platform for long-distance quantum communication networks
\cite{FSO}. Optical free-space links enable high-bandwidth and
license-free transmission and are widely considered a promising
technology for ground-to-satellite and satellite-to-ground quantum
communication systems. Nevertheless, atmospheric turbulence \cite{AlHabash2001} introduces
random irradiance fluctuations and phase distortions that significantly
degrade the reliability of quantum state discrimination. \blue{Atmospheric turbulence-induced irradiance fluctuations are commonly modeled by the Gamma-Gamma distribution under appropriate scintillation regimes, which accounts for both small-scale and large-scale scintillation effects \cite{AlHabash2001,Andrews2005,BhatnagarRelay}.} Even under optimal quantum measurements \cite{HelstromBook}
turbulence-induced fading fundamentally limits the achievable
reliability of long-distance quantum optical links.

Relay-assisted architectures have been proposed as an effective
mechanism for mitigating turbulence impairments by partitioning long
propagation paths into multiple shorter hops. In this context,
relay-assisted quantum FSO communication has recently been investigated
in \cite{BhatnagarRelay}, where a rigorous analytical framework for
decode-and-forward (DF) quantum relays operating over
Gamma-Gamma turbulence channels was established. The analysis derived
closed-form Helstrom error expressions for coherent-state signaling and
demonstrated the reliability benefits achievable through relay
deployment. However, existing relay-assisted quantum communication
studies typically assume externally powered relay nodes and therefore
do not account for the energy sustainability constraints that arise in
practical autonomous quantum networks.

Energy harvesting (EH) has recently emerged as an attractive solution
for enabling self-sustaining wireless communication systems
\cite{Ulukus2019, Varshney2008, Nasir2013}. In classical wireless networks,
simultaneous wireless information and power transfer (SWIPT)
architectures have been extensively studied as a mechanism for
supporting energy-constrained communication nodes. More recently, similar ideas have begun to appear in quantum
communication systems. In particular, wireless-powered quantum
communication architectures have been investigated in
\cite{KrikidisTesla}, where the integration of EH
and quantum state discrimination was first explored.

While the above studies provide important insights into different
aspects of quantum optical communication, a unified framework that
jointly captures relay-assisted transmission, EH
constraints, and non-classical optical signaling is still missing.
In particular, the relay-assisted quantum FSO framework developed
in~\cite{BhatnagarRelay} focuses on externally powered relays and
coherent-state signaling, while wireless-powered quantum optical
architectures such as~\cite{KrikidisTesla} consider
EH nodes but do not incorporate relay-assisted
turbulence-impaired links. On the other hand, prior works on
squeezing-enhanced quantum communication~\cite{Chesi2018}
primarily investigate point-to-point systems and do not address
energy-constrained relay operation under atmospheric turbulence.
Consequently, the interplay between turbulence-induced fading,
energy causality at relay nodes, and non-classical signaling
remains largely unexplored. This work bridges this gap by developing a unified analytical framework for relay-assisted quantum FSO communication with optical EH and DSS signaling.
 The proposed model enables a fundamental characterization of
the reliability and achievable throughput of energy-constrained
quantum relay systems operating over turbulence-impaired optical
channels under Helstrom-optimal quantum detection.

\begin{bl}
Squeezing-enhanced binary communication has been studied for point-to-point links in the lossless setting, where the optimal split of a fixed photon budget between displacement and squeezing is known in closed form \cite{Chesi2018, Paris2001,Walsh2025}. In atmospheric channels, however, turbulence acts on the optical field as a random loss, which is described by a probability distribution of transmittance (PDT) \cite{SemenovVogel2009,Vasylyev2016}. Since a vacuum mode necessarily enters through every loss port, a squeezed state is mixed by any nontrivial transmissivity, even when the transmissivity is perfectly known. The reliability analysis of a DSS relay must therefore be carried out with mixed-state Helstrom detection, which distinguishes the present problem from the lossless one. Following the truncation-and-renormalization principle used for bounded atmospheric PDTs in \cite{Vasylyev2018}, we adopt a tractable truncated Gamma-Gamma model, in which the Gamma-Gamma law of the scaled irradiance is conditioned on the physically admissible range of the transmissivity (at most one) and renormalized. The application of this construction to Gamma-Gamma irradiance is a modeling approximation of the present work.
\end{bl}

\begin{bl}
The discrimination of mixed Gaussian states has been addressed through the quantum Chernoff bound \cite{Audenaert2007} and its computable forms for Gaussian states \cite{Calsamiglia2008,Pirandola2008,Tan2008}, which characterize the error of a given pair of mixed states for fixed channel parameters. The present paper builds on these tools, together with the fidelity and the Fuchs--van de Graaf inequalities, and develops the analysis at the system level, where its contribution lies. First, the mixedness is derived from the physical pure-loss channel of an FSO hop: squeezed relay states are mixed, whereas coherent states remain pure. Second, the exact Helstrom error is averaged over the random transmissivity of the truncated Gamma-Gamma model, which allows us to prove that a loss-aware photon allocation exists and differs from the lossless rule. Third, we establish a fading-specific result: optimized squeezing preserves the diversity order and the leading constant of coherent relaying, and the covariance-preserving (CP) benchmark isolates what the vacuum noise removes. Fourth, these results are embedded in a time-switching energy-harvesting DF relay, for which we characterize the throughput-optimal harvesting design and a rigorous throughput upper bound.
\end{bl}

\begin{bl}
The main contributions of this work are summarized as follows.

\begin{itemize}

\item \textbf{Physically consistent framework for EH quantum relays:}
We develop an analytical framework for time-switching decode-and-forward (TS-DF) quantum relay-assisted FSO communication with optical EH, in which each hop is a pure-loss bosonic channel with truncated Gamma-Gamma-distributed transmissivity and detection is Helstrom-optimal. Coherent inputs remain pure, whereas DSS inputs become mixed Gaussian states.

\item \textbf{Exact characterization of the DSS hop:}
We derive the Gaussian output states of the DSS hop, a closed-form fidelity, and rigorous two-sided bounds on the Helstrom error, with the lower bound becoming exact for pure-state outputs (fidelity tools tailored to the pure-loss channel). For coherent signaling, we obtain an exact Laplace-series representation of the average error over Gamma-Gamma turbulence, with an explicit bound on the effect of the truncation.

\item \textbf{Loss-aware squeezing allocation:}
We prove the existence of the photon partition that minimizes the exact average Helstrom error, and show that it does not reduce to the lossless rule of \cite{Chesi2018, Paris2001,Walsh2025}: under turbulence-induced loss, the lossless rule can yield a larger error than coherent relaying, while the loss-aware allocation is a small squeezing fraction.

\item \textbf{Asymptotic reliability under loss:}
We prove that optimized squeezing preserves the turbulence-induced diversity order and the leading constant of coherent relaying, and that the optimal squeezing fraction vanishes as the photon budget grows. We also show that a  CP benchmark would double the diversity order, which quantifies what is lost through the vacuum noise entering the loss port.

\item \textbf{Throughput and harvesting design:}
We show that an interior throughput-optimal harvesting fraction exists, and that the CP benchmark is a rigorous throughput upper bound whose absolute gap vanishes as the photon budget grows.

\end{itemize}
\end{bl}

\emph{Notation:}
Scalars are denoted by lowercase letters, while random variables are
denoted by uppercase letters, except the random transmissivities $\eta_i$. The realization of a random variable
$X$ is denoted by $x$. Sets are denoted by calligraphic letters, e.g.,
$\mathcal{X}$. The expectation operator with respect to the
distribution of a random variable $X$ is denoted by $\mathbb{E}[\cdot]$.
The notation $\log_2(\cdot)$ denotes the logarithm with base $2$, while
$\ln(\cdot)$ denotes the natural logarithm. The operator
$\mathrm{Tr}(\cdot)$ denotes the trace of a matrix or linear operator. Quantum states are represented by density operators $\rho$ acting on a Hilbert space, while pure quantum states are denoted using Dirac's braket
notation $|\psi\rangle$. The inner product between two quantum states is
denoted by $\langle \psi | \phi \rangle$. The notation $\{\Pi_m\}$
denotes a positive operator-valued measure (POVM) used for quantum
measurement. The operator $\|\cdot\|_1$ denotes the trace norm. The
symbol $\mathbb{P}(\cdot)$ denotes probability.

The remainder of this paper is organized as follows. Section II introduces the system model and the optical EH architecture. Section III formulates the end-to-end detection problem under Helstrom-optimal receivers. Section IV presents the performance analysis and derives the structural properties of the optimal design. Numerical results are provided in Section V, followed by concluding remarks in Section VI.

\section{System Model}
\label{sec:system}

We consider a relay-assisted quantum FSO
communication system composed of three nodes: a source $S$, a relay
$R$, and a destination $D$, as depicted in
Fig.~\ref{fig:systemRelay}. Communication takes place over two
consecutive optical hops, namely $S\!\to\!R$ and $R\!\to\!D$. Both hops
are impaired by atmospheric turbulence. 
The relay operates according to a TS-DF protocol and is assumed to be batteryless. Each transmission block is normalized to unit duration,
so that the time-switching parameter directly represents
the fraction of time allocated to EH.

Under the TS-DF protocol, a fraction $t\in(0,1)$ of each transmission
block is allocated to EH at the relay, remaining fraction $1-t$ is used for information transmission, which, under the half-duplex protocol of Remark~1, is split into an S$\to$R slot and an R$\to$D slot of duration $(1-t)/2$ each. All
energy harvested within a given block is consumed within the same
block. Hence, the relay satisfies an instantaneous energy-causality
constraint and no inter-block energy storage is assumed.
\begin{remark}
The proposed TS-DF relay operates in a half-duplex mode. Within each
transmission block, energy harvesting and information transmission are orthogonalized via time-switching, resulting in two non-overlapping phases. During the communication phase, the S$\rightarrow$R and R$\rightarrow$D links are executed sequentially, ensuring that the relay
does not transmit and receive simultaneously. This assumption is
consistent with batteryless optical relay architectures and avoids the optical isolation and self-interference mitigation requirements
associated with full-duplex operation.
\end{remark}

\begin{figure}[!t]
\centering
\includegraphics[width=\columnwidth,trim={0 2mm 0 2mm},clip]{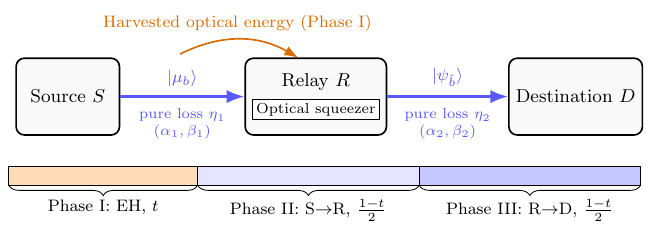}
\caption{System model of the proposed TS-DF quantum relay-assisted FSO link. Each block is split into an EH phase of duration $t$ and two forwarding slots of duration $(1-t)/2$ each (half-duplex, Remark~1); each hop is a pure-loss channel with truncated Gamma-Gamma transmissivity $\eta_i$.}
\label{fig:systemRelay}
\end{figure}

\subsection{Source Signaling}

The source encodes classical information into quantum states of a
single bosonic optical mode. Let $\mathcal{H}$ denote the Hilbert space
of the mode, and let $\hat a$ and $\hat a^\dagger$ denote the
annihilation and creation operators satisfying the canonical
commutation relation
\begin{equation}
[\hat a,\hat a^\dagger]=1.
\label{eq:CCR}
\end{equation}
A coherent state $|\alpha\rangle\in\mathcal{H}$ is defined as the
eigenstate of the annihilation operator, i.e.,
\begin{equation}
\hat a|\alpha\rangle=\alpha|\alpha\rangle,
\label{eq:coherent_eigen}
\end{equation}
where $\alpha\in\mathbb{C}$ denotes the complex field amplitude. The
corresponding mean photon number is $|\alpha|^2$.

Throughout the paper, the source employs 
BPSK coherent-state modulation. The source alphabet is therefore
given by
\begin{equation}
\mathcal{E}_S
=
\left\{
|\mu_0\rangle,|\mu_1\rangle
\right\}
=
\left\{
|+\sqrt{a}\rangle,\,
|-\sqrt{a}\rangle
\right\},
\label{eq:source_alphabet_final}
\end{equation}
where $a>0$ denotes the mean photon number per transmitted symbol.
Unless otherwise stated, the two symbols are assumed equiprobable.

Since coherent states are non-orthogonal, they cannot be perfectly
distinguished by any physically admissible quantum measurement. Their
pairwise overlap is
\begin{equation}
\langle\mu_0|\mu_1\rangle
=
\exp(-2a),
\label{eq:source_overlap_final}
\end{equation}
which determines the fundamental discrimination limit on the first hop.

\subsection{FSO Channel  Model}

For hop $i\in\{1,2\}$, let $I_i$ denote the normalized irradiance
coefficient induced by atmospheric turbulence. The random variable
$I_i$ follows the Gamma-Gamma distribution with parameters
$(\alpha_i,\beta_i)$:
\begin{IEEEeqnarray}{l}
\nonumber
f_{I_i}(x)
=
\frac{2(\alpha_i\beta_i)^{\frac{\alpha_i+\beta_i}{2}}}
{\Gamma(\alpha_i)\Gamma(\beta_i)}
x^{\frac{\alpha_i+\beta_i}{2}-1}
K_{\alpha_i-\beta_i}\!\left(2\sqrt{\alpha_i\beta_i x}\right),\\
\qquad x>0,
\label{eq:gamma_gamma_pdf_final}
\end{IEEEeqnarray}
where $\Gamma(\cdot)$ denotes the Gamma function and $K_\nu(\cdot)$
denotes the modified Bessel function of the second kind. The turbulence strength can be related to the Rytov variance
$\sigma_R^2$, which, \blue{for a plane wave,} is given by \cite{Andrews2005}
\begin{equation}
\sigma_R^2 = 1.23\, C_n^2\, k^{7/6}\, L^{11/6}, 
\end{equation}
where $C_n^2$ denotes the refractive-index structure parameter,
$k=2\pi/\lambda$ is the optical wavenumber, and $L$ is the
propagation distance. The Gamma-Gamma parameters $(\alpha,\beta)$
are functions of $\sigma_R^2$ and characterize the turbulence regime. For a plane wave with negligible inner scale \cite{Andrews2005},
\begin{equation}
\begin{aligned}
\alpha&=\Big[\exp\Big(\tfrac{0.49\,\sigma_R^2}{(1+1.11\,\sigma_R^{12/5})^{7/6}}\Big)-1\Big]^{-1},\\
\beta&=\Big[\exp\Big(\tfrac{0.51\,\sigma_R^2}{(1+0.69\,\sigma_R^{12/5})^{5/6}}\Big)-1\Big]^{-1}.
\end{aligned}
\label{eq:alpha_beta}
\end{equation}

\begin{bl}
Let $F_{I_i}$ denote the distribution function of $I_i$, and let $h_{\ell,i}\in(0,1)$ denote the deterministic link factor of hop $i$.\footnote{\blue{The Gamma-Gamma law describes only the unit-mean fluctuation $I_i$, whereas the deterministic mean path gain (geometric spreading, extinction, receiver efficiency) is a separate factor \cite{Andrews2005,AlHabash2001}. For a receiver aperture of area $\mathcal A$ that is small compared with the beam scale, the transmittance is $\eta\simeq\mathcal A\,I(0)$, with $I(0)$ the instantaneous beam intensity at the aperture center \cite[Eq.~(7)]{Vasylyev2018}. Writing $I(0)=\langle I(0)\rangle I$ with $\langle I\rangle=1$ gives $\eta=h_\ell I$, where $h_\ell=\mathcal A\langle I(0)\rangle$ is the mean aperture transmittance before truncation.}}
Following the truncation-and-renormalization approach adopted for bounded atmospheric PDTs in \cite[Eq.~(8)]{Vasylyev2018}, we condition the scaled Gamma-Gamma irradiance on the physically admissible region $0<h_{\ell,i}I_i<1$. Thus, conditioned on $I_i<1/h_{\ell,i}$, the instantaneous transmissivity is $\eta_i=h_{\ell,i}I_i$, with density
\begin{equation}
f_{\eta_i}(\eta)=\frac{f_{I_i}(\eta/h_{\ell,i})}{h_{\ell,i}\,F_i},\qquad 0<\eta<1,
\label{eq:eta_def}
\end{equation}
where $F_i=F_{I_i}(1/h_{\ell,i})$ is the normalization constant. The application of this construction to the Gamma-Gamma irradiance model constitutes the tractable bounded-PDT approximation adopted in this work.

Conditioned on $\eta_i$, hop $i$ is modeled as the pure-loss bosonic channel $\Lambda_{\eta_i}$, i.e., a beamsplitter of transmissivity $\eta_i$ that couples the transmitted mode to an environmental mode $\hat e$ in the vacuum state \cite{Weedbrook2012,Serafini2017},
\begin{equation}
\hat a_{\mathrm{out}}=\sqrt{\eta_i}\,\hat a_{\mathrm{in}}+\sqrt{1-\eta_i}\,\hat e .
\label{eq:BS}
\end{equation}
This fading-loss description is the standard quantum-optical model of atmospheric links, in which the statistics of the loss are described by a PDT \cite{SemenovVogel2009,Vasylyev2016,Vasylyev2018}. In what follows, $\mathbb E[\cdot]$ denotes expectation with respect to \eqref{eq:eta_def}. We use the quadratures $\hat x=\hat a+\hat a^\dagger$ and $\hat p=-j(\hat a-\hat a^\dagger)$, so that $[\hat x,\hat p]=2j$ and the vacuum state has covariance matrix $\mathbf I_2$, and describe a Gaussian state by its mean vector $\bm m$ and covariance matrix $\Sigma$. Then \eqref{eq:BS} acts as \cite{Weedbrook2012,Serafini2017}
\begin{equation}
\bm m\mapsto\sqrt{\eta}\,\bm m,\qquad \Sigma\mapsto\eta\,\Sigma+(1-\eta)\,\mathbf I_2 .
\label{eq:Gauss_map}
\end{equation}
In particular, a coherent state $|\mu\rangle$ is mapped to the pure coherent state $|\sqrt{\eta}\,\mu\rangle$.

\begin{remark}[Physical interpretation of the channel model]
\label{rem:channel}
Each hop is a pure-loss channel: the field is attenuated by the atmosphere, and the lost fraction $1-\eta_i$ is replaced by the vacuum field of the environment, as in the beamsplitter relation \eqref{eq:BS}. Turbulence enters only through the random transmissivity $\eta_i$, which is known at the receiver; however, knowing $\eta_i$ does not remove the vacuum noise $(1-\eta_i)\mathbf I_2$ added by the loss. We make two further assumptions. First, BPSK signaling and the phase-aligned squeezing of Section~II-D are phase sensitive, so we assume that the receiver tracks the carrier phase and frequency, so that this alignment is preserved at the detector and the residual effect of turbulence is the fading of $\eta_i$. Second, the channel has no excess noise: there is no additive Gaussian noise or background radiation, which is a standard assumption in quantum-limited photodetection \cite{Shapiro2009}. Additive noise would make the received states mixed even for coherent inputs, and its inclusion is left for future work.
\end{remark}
\end{bl}

\subsection{Time-Switching Energy Harvesting at the Relay}
During the EH phase, the relay converts the received optical signal
into electrical energy through a photovoltaic front end. Let
$P_{R,\mathrm{rx}}$ denote the average optical power impinging on the
relay aperture during the harvesting interval. The harvested energy
within one transmission block is then
\begin{equation}
E_{\mathrm{harv}}(t)
=
\eta_{\mathrm{pv}}\, t\, P_{R,\mathrm{rx}},
\label{eq:harvested_energy_final}
\end{equation}
where $\eta_{\mathrm{pv}}\in(0,1]$ denotes the photovoltaic conversion
efficiency.\footnote{A linear EH model is adopted, where the harvested energy is proportional to the incident optical power. This is consistent with quantum photodetection theory, where the measured photocurrent scales with the expectation of the photon number operator under linear attenuation \cite{Glauber1963,Yuen1980}. The model is also widely used in SWIPT systems for analytical tractability and is accurate in the low-to-moderate input power regime \cite{Clerckx2019}.}

\begin{bl}
\textbf{Notation and energy model.} Let $a=|\mu_0|^2$ and $N_2$ denote the mean photon numbers per symbol \emph{transmitted} by the source and by the relay, respectively. The power $P_{R,\mathrm{rx}}$ in \eqref{eq:harvested_energy_final} is the long-term average optical power collected by the relay aperture,
\begin{equation}
P_{R,\mathrm{rx}}=\mathbb E[\eta_1]\,P_S,
\end{equation}
where $P_S$ is the transmitted source power. The power $P_{R,\mathrm{rx}}$ therefore already includes the link factor $h_{\ell,1}$ and the statistics of the first hop. We assume that the EH phase spans many fading coherence intervals, so that the harvested energy, and hence $N_2$, is deterministic, whereas the information symbols of each hop experience block fading that is constant over a symbol and independent across the two hops. Harvesting within a single coherence interval would instead make $N_2$ random and correlated with $\eta_1$, which is left for future work.
\end{bl}

During the forwarding phase, the harvested energy is converted into
optical transmission by an electro-optical modulator with efficiency
$\eta_{\mathrm{el}}\in(0,1]$.
\begin{bl}
Under the half-duplex protocol of Remark~1, the communication phase of duration $1-t$ is shared equally between the S$\to$R and R$\to$D transmissions, so that the relay transmits during a slot of duration $(1-t)/2$. The harvested energy, which is proportional to $t$, is converted into a constant optical transmit power $P_{\mathrm{tx}}=2\eta_{\mathrm{el}}E_{\mathrm{harv}}(t)/(1-t)$ over this slot. With $T_s$ the symbol duration (expressed in the same units as the block duration), the mean photon number per forwarded symbol is $N_2=P_{\mathrm{tx}}T_s/(h\nu)$, that is,
\begin{equation}
\begin{split}
N_2(t)&=\Ct\,\frac{t}{1-t},\\
\Ct&\triangleq\frac{2\eta_{\mathrm{pv}}\eta_{\mathrm{el}}P_{R,\mathrm{rx}}T_s}{h\nu}
=\frac{2\eta_{\mathrm{pv}}\eta_{\mathrm{el}}P_{R,\mathrm{rx}}}{h\nu R_s},
\end{split}
\label{eq:N2_final}
\end{equation}
where $R_s=1/T_s$ is the symbol rate and $h\nu$ denotes the photon energy at optical carrier frequency $\nu$. The parameter $\Ct$ is dimensionless, and $N_2(t)$ is a photon number per forwarded symbol and not the total harvested energy. Increasing $t$ enlarges the relay photon budget, but simultaneously shortens the time available for information forwarding.
\end{bl}

\begin{remark}
The adopted time-switching architecture relies on an energy causality
constraint, whereby the relay harvests optical energy during a
dedicated phase and utilizes this energy for transmission in the
subsequent phase. This model does not require long-term energy storage
or battery operation, but rather assumes short-term energy buffering
over a single transmission block, which is standard in wireless-powered
communication systems~\cite{Ulukus2019, Nasir2013}. Such an assumption
is consistent with practical implementations of EH
receivers employing capacitor or photonic storage elements that
enable immediate reuse of the harvested energy without necessitating
conventional battery storage \cite{Clerckx2019}.
\end{remark}
\subsection{Relay Signaling via DSS}

After decoding the source symbol, the relay re-encodes its decision
into a displaced squeezed state (DSS). DSS signaling combines coherent
displacement and quadrature squeezing, and therefore provides an
additional design degree of freedom beyond conventional coherent-state
relaying.

Let $|0\rangle$ denote the vacuum state of the optical mode. The relay
transmits
\begin{equation}
|\psi_b\rangle
=
D\!\left((-1)^b\gamma_2\right)S(r,\theta_s)|0\rangle,
\qquad b\in\{0,1\},
\label{eq:DSS_general_final}
\end{equation}
where $\gamma_2\ge 0$ denotes the displacement amplitude, $r\ge 0$ is
the squeezing parameter, and $\theta_s\in[0,\pi)$ is the squeezing
phase. The displacement operator is defined as \cite{Chesi2018}
\begin{equation}
D(\alpha)
=
\exp\!\left(\alpha\hat a^\dagger-\alpha^*\hat a\right),
\qquad \alpha\in\mathbb{C},
\label{eq:displacement_operator_final}
\end{equation}
whereas the squeezing operator is given by
\begin{equation}
S(r,\theta_s)
=
\exp\!\left[
\frac{r}{2}
\left(
e^{-j2\theta_s}\hat a^2
-
e^{j2\theta_s}\hat a^{\dagger 2}
\right)
\right].
\label{eq:squeezing_operator_general_final}
\end{equation}

We consider phase-aligned signaling, namely $\theta_s=0$, so that the
squeezing axis is aligned with the displacement quadrature. Under this
choice, \eqref{eq:squeezing_operator_general_final} reduces to
\begin{equation}
S(r)
=
\exp\!\left[
\frac{r}{2}
\left(
\hat a^2-\hat a^{\dagger 2}
\right)
\right],
\label{eq:squeezing_operator_aligned_final}
\end{equation}
and the relay alphabet becomes
\begin{equation}
\mathcal{E}_R
=
\left\{
|\psi_0\rangle,|\psi_1\rangle
\right\}
=
\left\{
D(\gamma_2)S(r)|0\rangle,\,
D(-\gamma_2)S(r)|0\rangle
\right\}.
\label{eq:relay_alphabet_final}
\end{equation}

Let $\hat n=\hat a^\dagger\hat a$ denote the photon-number operator.
The mean photon number of the DSS state in
\eqref{eq:DSS_general_final} is
\begin{equation}
\langle\psi_b|\hat n|\psi_b\rangle
=
\gamma_2^2+\sinh^2(r).
\label{eq:DSS_mean_photons_final}
\end{equation}
Since the relay is energy constrained, the DSS parameters must satisfy
\begin{equation}
\gamma_2^2+\sinh^2(r)=N_2(t).
\label{eq:relay_constraint_final}
\end{equation}
The expression in \eqref{eq:DSS_mean_photons_final} admits a
natural physical interpretation: the total photon budget is
additively decomposed into two components, namely
\begin{itemize}
\item $\gamma_2^2$: the average number of photons associated
with coherent displacement,
\item $\sinh^2(r)$: the average number of photons generated
by the squeezing operation.
\end{itemize}
Hence, DSS signaling enables a controllable redistribution of
the available energy between classical amplitude modulation
(displacement) and quantum noise shaping (squeezing).

To explicitly parameterize the allocation of the relay photon budget
between displacement and squeezing, we introduce the squeezing
fraction \cite{Chesi2018}
\begin{equation}
\zeta
=
\frac{\sinh^2(r)}{N_2(t)},
\qquad 0\le \zeta\blue{{\le}}1.
\label{eq:zeta_final}
\end{equation}
From \eqref{eq:relay_constraint_final} and \eqref{eq:zeta_final}, the
displacement amplitude and squeezing parameter can be written as
\begin{equation}
\gamma_2^2(t,\zeta)
=
(1-\zeta)N_2(t),
\label{eq:gamma2_final}
\end{equation}
and
\begin{equation}
r(t,\zeta)
=
\operatorname{arsinh}\!\big(\sqrt{\zeta N_2(t)}\big).
\label{eq:r_final}
\end{equation}
Hence, for a fixed harvesting fraction $t$, the relay photon budget
$N_2(t)$ is uniquely partitioned between coherent displacement and
quadrature squeezing through the scalar parameter $\zeta$.
\begin{remark}
The relevance of DSS signaling arises from its ability to reshape the quadrature statistics of the transmitted optical field. Specifically, squeezing redistributes the noise across
quadratures by reducing the variance in one quadrature while increasing
it in the conjugate one, whereas displacement preserves the signal-space
separation required for BPSK modulation \cite{Chesi2018}. This additional design degree
of freedom enables improved state discrimination under Helstrom-optimal
detection, particularly in energy-constrained regimes where coherent
signaling becomes suboptimal.
\end{remark}
\subsection{Receiver Processing and DF Operation}

The relay operates according to a TS-DF protocol, and
hence communication involves two consecutive quantum detection
problems.

\textbf{First hop ($S\!\to\!R$):} The relay receives the coherent-state
ensemble
$\{p_b,|\mu_b\rangle\}_{b\in\{0,1\}}$ transmitted by the source and
performs the Helstrom-optimal measurement associated with this binary
ensemble. This measurement minimizes the average symbol error
probability over all physically admissible quantum measurements. Let
$\hat b\in\{0,1\}$ denote the relay decision.

\textbf{Second hop ($R\!\to\!D$):} During the forwarding phase, the
relay re-encodes $\hat b$ into the DSS signaling 
\begin{equation}
|\psi_{\hat b}\rangle
=
D\!\left((-1)^{\hat b}\gamma_2\right)S(r)|0\rangle,
\label{eq:DSS_reencoded_final}
\end{equation}
subject to the energy constraint \eqref{eq:relay_constraint_final}.
The destination then performs the Helstrom-optimal measurement
associated with the received DSS ensemble $\mathcal{E}_R$.

Let $P_{e,1}$ and $P_{e,2}$ denote the symbol error probabilities of
the first and second hops, respectively. Since the relay performs hard
decoding, an error event on the first hop propagates through the
forwarding stage and directly affects the end-to-end reliability.
Consequently, the overall system performance depends jointly on the two
per-hop detection problems.

This completes the physical-layer description of the proposed TS-DF
quantum relay architecture. In the next section, we formalize the
corresponding detection problem and derive the end-to-end performance
metrics under Gamma-Gamma turbulence and
relay energy constraints.
\section{Problem Formulation}
\label{sec:problem}

In this section, we formulate the end-to-end performance criterion of
the proposed TS-DF quantum relay system and state the associated
resource-allocation problem. Since BPSK signaling is employed at both
hops, each receiver faces a binary quantum hypothesis testing problem.
The corresponding optimal performance is characterized by Helstrom's
minimum-error detection principle.

\subsection{Binary Quantum Detection and Helstrom Limit}
\label{subsec:helstrom}

Consider a binary quantum ensemble
\begin{equation}
\mathcal{E}
=
\left\{
p_0,\rho_0;\,
p_1,\rho_1
\right\},
\label{eq:binary_ensemble_final}
\end{equation}
where $\rho_0$ and $\rho_1$ denote the density operators associated
with the two hypotheses and $p_0+p_1=1$ are the prior probabilities.
Any physically admissible receiver is described by a  POVM $\{\Pi_0,\Pi_1\}$ satisfying
\begin{equation}
\Pi_0+\Pi_1=\mathbb{I},
\qquad
\Pi_0\succeq0,\ \Pi_1\succeq0,
\label{eq:binary_POVM_final}
\end{equation}
where $\mathbb{I}$ denotes the identity operator on the corresponding
Hilbert space.

For a given POVM, the average probability of correct detection is
\begin{equation}
P_{\mathrm{c}}
=
\blue{p_0{\Tr[\Pi_0\rho_0]}+p_1{\Tr[\Pi_1\rho_1]}},
\label{eq:Pc_binary_final}
\end{equation}
and hence the associated average probability of error is
\begin{equation}
P_e
=
1-P_{\mathrm{c}}.
\label{eq:Pe_binary_final}
\end{equation}

The minimum achievable error probability over all POVMs satisfying
\eqref{eq:binary_POVM_final} is the Helstrom bound \footnote{
We assume Helstrom-optimal quantum detection, which achieves the
minimum possible discrimination error for binary quantum states.
Although the Helstrom measurement is generally difficult to implement
exactly, several near-optimal receiver architectures based on adaptive
displacement operations and photon counting have been proposed in the
literature~\cite{Takeoka2008, Kennedy1973, Dolinar1973}.
}.
\begin{equation}
P_e^{\Hel}(\mathcal{E})
=
\min_{\{\Pi_0,\Pi_1\}}
\left\{
\blue{1-p_0{\Tr[\Pi_0\rho_0]}-p_1{\Tr[\Pi_1\rho_1]}}
\right\}.
\label{eq:Helstrom_general_final}
\end{equation}

For equiprobable binary signaling, that is,
$p_0=p_1=\tfrac{1}{2}$, \eqref{eq:Helstrom_general_final} admits the
trace-norm representation
\begin{equation}
P_e^{\Hel}(\mathcal{E})
=
\frac{1}{2}
\left(
1-\frac{1}{2}\|\rho_0-\rho_1\|_1
\right),
\label{eq:Helstrom_trace_final}
\end{equation}
where $\|\cdot\|_1$ denotes the trace norm. If the two hypotheses are
pure states, namely
$\rho_0=|\psi_0\rangle\langle\psi_0|$ and
$\rho_1=|\psi_1\rangle\langle\psi_1|$, then
\eqref{eq:Helstrom_trace_final} simplifies to
\begin{equation}
P_e^{\Hel}(\mathcal{E})
=
\frac{1}{2}
\left(
1-\sqrt{1-|\langle\psi_0|\psi_1\rangle|^2}
\right).
\label{eq:Helstrom_pure_final}
\end{equation}
Therefore, in the pure-state case, optimal detection performance is
entirely determined by the overlap between the two candidate states.

\subsection{Per-Hop Detection Errors}
\label{subsec:perhop}

The considered relay architecture induces two binary quantum detection
problems corresponding to the two hops of the link.

\textbf{First hop ($S\!\to\!R$):} The relay receives the coherent-state
BPSK ensemble
\begin{equation}
\mathcal{E}_1
=
\left\{
\frac12,|\mu_0\rangle;\,
\frac12,|\mu_1\rangle
\right\},
\label{eq:E1_final}
\end{equation}
where $|\mu_0\rangle$ and $|\mu_1\rangle$ are given by
\eqref{eq:source_alphabet_final}.
\begin{bl}
Conditioned on the realization $I_1=x<1/h_{\ell,1}$, the transmissivity is $\eta_1(x)=h_{\ell,1}x$ and the received states are the pure coherent states $|\pm\sqrt{\eta_1(x)\,a}\rangle$.
\end{bl}
The corresponding conditional Helstrom error probability is denoted by
\begin{equation}
P_{e,1}^{\Hel}(x)
\triangleq
P_e^{\Hel}\!\left(\mathcal{E}_1\,\middle|\,I_1=x\right).
\label{eq:Pe1_cond_final}
\end{equation}
Averaging over the \blue{truncated} irradiance law \blue{\eqref{eq:eta_def}} yields the first-hop average
Helstrom error probability
\begin{equation}
\bar P_{e,1}^{\Hel}
=
\mathbb{E}
\left[
P_{e,1}^{\Hel}(I_1)
\right].
\label{eq:Pe1_avg_final}
\end{equation}
Since the source photon number $a$ is fixed,
$\bar P_{e,1}^{\Hel}$ is independent of the harvesting fraction $t$.

\medskip

\textbf{Second hop ($R\!\to\!D$):}
\begin{bl}
Conditioned on $I_2=x < 1/h_{\ell,2}$, let $\eta=\eta_2(x)=h_{\ell,2}x$. The destination receives the DSS ensemble
\begin{equation}
\mathcal{E}_2
=
\left\{
\frac12,\rho_0;\,
\frac12,\rho_1
\right\},
\qquad
\rho_b=\Lambda_{\eta}\big(|\psi_b\rangle\langle\psi_b|\big),
\label{eq:E2_final}
\end{equation}
where $|\psi_b\rangle$ are given by \eqref{eq:relay_alphabet_final} and $\Lambda_\eta$ is the pure-loss channel of Remark~\ref{rem:channel}. The relay photon budget depends on the harvesting fraction through \eqref{eq:N2_final}, and the DSS parameters are determined by $(t,\zeta)$ through \eqref{eq:gamma2_final} and \eqref{eq:r_final}. The conditional Helstrom error probability is
\begin{equation}
P_{e,2}^{\Hel}(\eta;t,\zeta)
\triangleq
\frac12\Big(1-\frac12\|\rho_1-\rho_0\|_1\Big),
\label{eq:Pe2_cond_final}
\end{equation}
and the second-hop average Helstrom error probability is
\begin{equation}
\bar P_{e,2}^{\Hel}(t,\zeta)
=
\mathbb{E}
\left[
P_{e,2}^{\Hel}\big(\eta_2;t,\zeta\big)
\right].
\label{eq:Pe2_avg_final}
\end{equation}
\end{bl}
The Helstrom measurement is matched to the instantaneous transmissivity $\eta$, which is assumed known at the receiver (Remark~\ref{rem:channel}); $\bar P_{e,2}^{\Hel}$ below is therefore the average of the conditional optimal errors.
\subsection{End-to-End Reliability}
\label{subsec:e2e_reliability}

Since the relay performs hard decoding and forwards a re-encoded
version of its own decision, the end-to-end symbol error probability
follows the classical DF composition rule. Under the
standard assumption that the two hops are statistically independent,
the average end-to-end error probability is
\begin{equation}
\bar P_e^{\DF}(t,\zeta)
=
\bar P_{e,1}^{\Hel}
\left(
1-\bar P_{e,2}^{\Hel}(t,\zeta)
\right)
+
\left(
1-\bar P_{e,1}^{\Hel}
\right)
\bar P_{e,2}^{\Hel}(t,\zeta).
\label{eq:Pe_DF_final}
\end{equation}
Equivalently,
\begin{equation}
\bar P_e^{\DF}(t,\zeta)
=
\bar P_{e,1}^{\Hel}
+
\bar P_{e,2}^{\Hel}(t,\zeta)
-
2\bar P_{e,1}^{\Hel}\bar P_{e,2}^{\Hel}(t,\zeta).
\label{eq:Pe_DF_affine_final}
\end{equation}

\subsection{Effective Throughput}
\label{subsec:rate}

To jointly capture reliability and the time-switching penalty, we
define the effective throughput, in bits per channel use, as
\begin{bl}
\begin{equation}
R_2(t,\zeta)
=
\frac{1-t}{2}\left(1-\bar P_e^{\DF}(t,\zeta)\right).
\label{eq:rate_final_full}
\end{equation}
Each correctly detected BPSK symbol carries one bit. The factor $(1-t)/2$ is the fraction of each block allocated to the transmissions of each hop under the half-duplex protocol of Remark~1, whereas $1-\bar P_e^{\DF}(t,\zeta)$ represents the probability of correct end-to-end detection.
\end{bl}

The metric in \eqref{eq:rate_final_full} captures the fundamental
trade-off induced by the TS-DF architecture. Increasing the harvesting
fraction $t$ enlarges the relay photon budget $N_2(t)$ and can improve
second-hop discrimination, but it also reduces the available
transmission time. Hence, the optimal design must balance energy
accumulation against forwarding duration.

\subsection{Optimal Resource Allocation}
\label{subsec:opt_problem}

The objective of the proposed TS-DF relay is to determine the
harvesting fraction $t$ and the squeezing allocation parameter $\zeta$
that maximize the effective throughput in
\eqref{eq:rate_final_full}. The resulting optimization problem is
\begin{equation}
\begin{aligned}
[\mathrm{P1}]\qquad
\max_{t,\zeta}\quad
& R_2(t,\zeta)\\
\text{s.t.}\quad
& 0<t<1,\\
& 0\le\zeta\blue{{\le}}1.
\end{aligned}
\label{eq:P1_final_full}
\end{equation}

Problem \eqref{eq:P1_final_full} is a low-dimensional
resource-allocation problem with a clear physical meaning. The variable
$t$ controls the energy-time trade-off through the mapping
$N_2(t)=\blue{\Ct}\frac{t}{1-t}$, whereas $\zeta$ determines how the relay
photon budget is partitioned between coherent displacement and
quadrature squeezing. The resulting optimization therefore couples the
time-switching EH mechanism with the quantum-state design at the relay.

In the following section, we derive the per-hop Helstrom error
probabilities under Gamma-Gamma turbulence and
establish structural \blue{and asymptotic} properties of the throughput function
\blue{$R_2(t,\zeta)$} that enable efficient solution of
\eqref{eq:P1_final_full}.
\vspace{-2ex}

\section{\blue{Helstrom Error Analysis and Rate Optimization}}
\label{sec:analysis}

\begin{bl}
This section characterizes the reliability and throughput of the proposed TS-DF quantum relay. We first derive the first-hop error and the exact mixed-state description of the second hop. We then give a closed-form fidelity with two-sided bounds on the second-hop Helstrom error, establish the existence of the loss-aware photon partition, and derive an exact series representation for coherent relaying and for a CP benchmark. Finally, we characterize the asymptotic reliability and the structure of the effective throughput. Each lemma, proposition and theorem is proved in a dedicated appendix. Throughout, we define
\begin{equation}
\vphi(u)\triangleq\tfrac12\big(1-\sqrt{1-e^{-u}}\big),\qquad u\ge 0,
\label{eq:phi_def}
\end{equation}
which is continuous and strictly decreasing, with $\vphi(0)=\tfrac12$, $\vphi(\infty)=0$ and $\vphi(u)\le\tfrac12e^{-u}$.
\end{bl}

\subsection{\blue{First-Hop Helstrom Error}}
\label{subsec:firsthop_analysis}

\begin{bl}
By \eqref{eq:source_alphabet_final}, the source transmits the BPSK ensemble $|\mu_0\rangle=|+\sqrt a\rangle$, $|\mu_1\rangle=|-\sqrt a\rangle$, where $a=|\mu_0|^2$ is the \emph{transmitted} mean photon number, which is distinct from the irradiance realization $x=I_1$. Conditioned on $I_1=x < 1/h_{\ell,1}$, the received states are the pure coherent states
\begin{equation}
|\psi_0^{(1)}(x)\rangle=|\sqrt{\eta_1(x)}\,\mu_0\rangle,\qquad
|\psi_1^{(1)}(x)\rangle=|\sqrt{\eta_1(x)}\,\mu_1\rangle,
\label{eq:firsthop_states_final}
\end{equation}
so that the received amplitude never exceeds the transmitted one. Using $\langle\alpha|\beta\rangle=\exp(-|\alpha|^2/2-|\beta|^2/2+\alpha^*\beta)$ and $|\mu_0-\mu_1|^2=4a$, the squared overlap is $\exp(-4a\eta_1(x))$, and by \eqref{eq:Helstrom_pure_final} the conditional and average first-hop Helstrom errors are
\begin{equation}
P_{e,1}^{\Hel}(x)=\vphi\big(4a\,\eta_1(x)\big),
\qquad
\bar P_{e,1}^{\Hel}=\mathbb E\big[\vphi(4a\,\eta_1)\big].
\label{eq:firsthop_average_error_final}
\end{equation}
As expected, $\bar P_{e,1}^{\Hel}$ does not depend on the harvesting fraction $t$. A series representation of \eqref{eq:firsthop_average_error_final} is given in Section~\ref{subsec:series_representation}.
\end{bl}

\subsection{\blue{Second Hop: Exact Mixed-State Formulation}}
\label{subsec:secondhop_analysis}

\begin{bl}
The following lemma gives the exact states received on the second hop.

\begin{lemma}[Attenuator output for DSS inputs]
\label{lem:output}
Let $\gamma_2\ge0$, $r\ge0$, and $\eta\in[0,1]$. The states $\rho_b=\Lambda_\eta(|\psi_b\rangle\langle\psi_b|)$ are Gaussian with means $\bm m_b=\big((-1)^b\,2\sqrt{\eta}\,\gamma_2,\,0\big)^{\!\top}$ and common covariance matrix
\begin{equation}
\begin{aligned}
\Sigma&=\mathrm{diag}(v_x,v_p),\\
v_x&=1-\eta+\eta e^{-2r},\quad
v_p=1-\eta+\eta e^{2r}.
\end{aligned}
\label{eq:Sigma_out}
\end{equation}
Moreover, $\det\Sigma=1+4\eta(1-\eta)\sinh^2 r$ and $\Tr\rho_b^2=(\det\Sigma)^{-1/2}$. Hence $\rho_b$ is pure if and only if $r=0$ or $\eta\in\{0,1\}$, and it is strictly mixed for $r>0$ and $0<\eta<1$. For $r=0$ (coherent relaying), $\rho_b=|(-1)^b\sqrt{\eta N_2}\rangle\langle(-1)^b\sqrt{\eta N_2}|$.
\end{lemma}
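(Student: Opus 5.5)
The plan is to work entirely in the Gaussian phase-space picture introduced with \eqref{eq:Gauss_map}. The first task is to compute the first and second moments of the input DSS states $|\psi_b\rangle=D((-1)^b\gamma_2)S(r)|0\rangle$. Under the convention $\hat x=\hat a+\hat a^\dagger$, $\hat p=-j(\hat a-\hat a^\dagger)$, the squeezing operator \eqref{eq:squeezing_operator_aligned_final} acts through the Bogoliubov transformation $S^\dagger(r)\hat a S(r)=\hat a\cosh r-\hat a^\dagger\sinh r$. Hence $S^\dagger\hat xS=e^{-r}\hat x$ and $S^\dagger\hat pS=e^{r}\hat p$, so the squeezed vacuum has covariance $\mathrm{diag}(e^{-2r},e^{2r})$ and zero mean.

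The displacement acts as $D^\dagger(\alpha)\hat aD(\alpha)=\hat a+\alpha$. It therefore shifts the mean to $(2\gamma_2(-1)^b,0)^\top$, since $\langle\hat x\rangle=2\,\mathrm{Re}\,\alpha$, and leaves the covariance unchanged. Both factors are Gaussian unitaries, so $|\psi_b\rangle$ is a pure Gaussian state.

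Next I would invoke the fact that the pure-loss channel \eqref{eq:BS} is a Gaussian channel, because it is a beamsplitter with a vacuum ancilla followed by a partial trace. It therefore maps Gaussian states to Gaussian states according to \eqref{eq:Gauss_map}. Applying this map gives $\bm m_b=\sqrt\eta(2(-1)^b\gamma_2,0)^\top$ and $\Sigma=\eta\,\mathrm{diag}(e^{-2r},e^{2r})+(1-\eta)\mathbf I_2$, which is exactly \eqref{eq:Sigma_out}. The covariance is common to both $b$ because displacement does not affect second moments.

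For the determinant, I expand
\[
v_xv_p=(1-\eta)^2+\eta^2+\eta(1-\eta)(e^{2r}+e^{-2r}).
\]
Using $e^{2r}+e^{-2r}=2+4\sinh^2 r$ and $(1-\eta)^2+\eta^2+2\eta(1-\eta)=1$, this gives $\det\Sigma=1+4\eta(1-\eta)\sinh^2 r$.

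The purity formula $\Tr\rho^2=(\det\Sigma)^{-1/2}$ is the standard single-mode Gaussian result for the vacuum-normalized convention $\Sigma_{\rm vac}=\mathbf I_2$. I would cite it from \cite{Serafini2017}. Alternatively, it can be derived by writing $\Tr\rho^2=\frac{1}{4\pi}\int|\chi(\bm\xi)|^2d^2\xi$ with a Gaussian characteristic function and evaluating the resulting Gaussian integral. The main obstacle is keeping the factor-of-two conventions consistent here, namely the commutator $[\hat x,\hat p]=2j$ and the normalization of the vacuum variance. I would check the normalization on the vacuum, where $\det\Sigma=1$ must give purity $1$.

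Purity then holds if and only if $\det\Sigma=1$, that is, if and only if $\eta(1-\eta)\sinh^2r=0$. Since $\sinh r=0$ exactly when $r=0$, this occurs precisely when $r=0$ or $\eta\in\{0,1\}$. Otherwise $\det\Sigma>1$ and the state is strictly mixed.

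Finally, for $r=0$ the covariance is $\mathbf I_2$ and the mean is $(\pm2\sqrt{\eta}\gamma_2,0)$, which is the coherent state $|\pm\sqrt\eta\gamma_2\rangle$. With $\zeta=0$, \eqref{eq:gamma2_final} gives $\gamma_2=\sqrt{N_2}$, yielding $|(-1)^b\sqrt{\eta N_2}\rangle$. The same conclusion also follows directly from the remark after \eqref{eq:Gauss_map} that coherent states are mapped to coherent states.
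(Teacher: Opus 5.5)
Your proposal is correct and follows the paper's proof step for step: the quadrature actions of $S(r)$ and $D$ give the input moments, the Gaussian beamsplitter map \eqref{eq:Gauss_map} gives \eqref{eq:Sigma_out}, the determinant is expanded directly, the purity formula $\Tr\rho^2=(\det\Sigma)^{-1/2}$ is cited, and the case $r=0$ is specialized. Your optional characteristic-function derivation of the purity is the only addition; its $1/(4\pi)$ prefactor depends on how $\bm\xi$ is scaled relative to $D(\alpha)$, so the vacuum normalization check you planned is the right safeguard.
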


\textit{Proof:} See Appendix~\ref{app:proof_output}.

\begin{remark}[Why DSS states are mixed]
\label{rem:why_mixed}
The loss couples the signal mode to an environment mode in the vacuum state. After the environment is discarded, the output covariance is $\eta\Sigma_{\mathrm{in}}+(1-\eta)\mathbf I_2$, as in \eqref{eq:Gauss_map}, i.e., the vacuum covariance $\mathbf I_2$ is mixed in with weight $1-\eta$. For a coherent input, $\Sigma_{\mathrm{in}}=\mathbf I_2$, so the added vacuum noise equals the noise already present and the output is the pure coherent state $|\sqrt\eta\,\mu\rangle$. For a squeezed input, $\Sigma_{\mathrm{in}}=\mathrm{diag}(e^{-2r},e^{2r})\neq\mathbf I_2$, so the vacuum noise does not match the input noise: it raises the squeezed-quadrature variance from $e^{-2r}$ to $v_x=1-\eta(1-e^{-2r})$ and gives $\det\Sigma=1+4\eta(1-\eta)\sinh^2r>1$. Hence $\mathrm{Tr}\rho^2=(\det\Sigma)^{-1/2}<1$ for $r>0$ and $0<\eta<1$, i.e., the state is mixed because the loss correlates it with the environment.  The behavior of the DSS hop in Sections~\ref{subsec:fidelity} to~\ref{subsec:asymptotics} follows from these two facts.
\end{remark}
\end{bl}

\subsection{\blue{Closed-Form Fidelity and Two-Sided Bounds}}
\label{subsec:fidelity}

\begin{bl}
Let $F(\rho_0,\rho_1)=\|\sqrt{\rho_0}\sqrt{\rho_1}\|_1^2$ denote the squared Uhlmann fidelity \cite{Uhlmann1976,Jozsa1994}, which equals $|\langle\psi_0|\psi_1\rangle|^2$ for pure states, and define
\begin{equation}
\Theta_{\mathrm{in}}\triangleq4\gamma_2^2e^{2r}=4(1-\zeta)N_2\,e^{2r}.
\label{eq:Theta_in}
\end{equation}

\begin{proposition}[Gaussian fidelity and two-sided bounds]
\label{prop:fidelity}
For the states of Lemma~\ref{lem:output}, we have 
\begin{equation}
\begin{aligned}
F&=e^{-\Th},\\
\Th(\eta;t,\zeta)&=\frac{4\eta\gamma_2^2}{1-\eta+\eta e^{-2r}}
=\frac{4\eta(1-\zeta)N_2}{1-\eta+\eta e^{-2r}},
\end{aligned}
\label{eq:Fclosed}
\end{equation}
and the Helstrom error satisfies
\begin{equation}
\vphi(\Th)\;\le\;P_{e,2}^{\Hel}(\eta;t,\zeta)\;\le\;\tfrac12e^{-\Th/2}.
\label{eq:FvG}
\end{equation}
The lower bound holds with equality whenever $\rho_0,\rho_1$ are pure, i.e., for $\eta=1$ or $r=0$. In particular, $P_{e,2}^{\Hel}=\vphi(\Theta_{\mathrm{in}})$ for $\eta=1$, and $P_{e,2}^{\Hel}=\vphi(4\eta N_2)$ for $r=0$. Moreover,
\begin{equation}
\begin{aligned}
4\eta(1-\zeta)N_2\;&\le\;\Th\;\le\;\eta\,\Theta_{\mathrm{in}},\\
\Th\;&\le\;\frac{4\eta(1-\zeta)N_2}{1-\eta}\quad(0\le\eta<1),
\end{aligned}
\label{eq:Thbounds}
\end{equation}
and, for every $\eta\in[0,1]$,
\begin{equation}
P_{e,2}^{\Hel}(\eta;t,\zeta)\;\ge\;\vphi(\Theta_{\mathrm{in}}).
\label{eq:dataproc}
\end{equation}
\end{proposition}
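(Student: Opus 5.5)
The plan is to compute the fidelity first, using the Gaussian fidelity formula for two states with a common covariance matrix, and then to obtain the two-sided bounds from the Fuchs--van de Graaf inequalities. By Lemma~\ref{lem:output}, $\rho_0,\rho_1$ share the covariance $\Sigma=\mathrm{diag}(v_x,v_p)$ and differ only in their means, with $\bm\delta=\bm m_0-\bm m_1=(4\sqrt\eta\,\gamma_2,0)^\top$. Since the covariances are equal, $\rho_1=D\rho_0D^\dagger$ for a displacement $D$. The fidelity of two Gaussian states with equal covariance $\Sigma$ (in the convention where vacuum has $\mathbf I_2$) is $F=\exp(-\tfrac14\bm\delta^\top\Sigma^{-1}\bm\delta)$; this is the standard formula of \cite{Serafini2017}, and I would quote it. Substituting gives $\tfrac14\cdot16\eta\gamma_2^2/v_x=4\eta\gamma_2^2/(1-\eta+\eta e^{-2r})$, which is \eqref{eq:Fclosed} after inserting \eqref{eq:gamma2_final}. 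As a sanity check, the pure coherent case $r=0$ gives $F=e^{-4\eta\gamma_2^2}$, matching $|\langle\alpha|{-\alpha}\rangle|^2$.

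The two-sided bound \eqref{eq:FvG} then follows from the Fuchs--van de Graaf inequalities $1-\sqrt F\le\tfrac12\|\rho_0-\rho_1\|_1\le\sqrt{1-F}$ combined with \eqref{eq:Helstrom_trace_final}. The right-hand inequality gives $P_{e,2}^{\Hel}\ge\tfrac12(1-\sqrt{1-F})=\vphi(\Th)$. The left-hand inequality gives $P_{e,2}^{\Hel}\le\tfrac12\sqrt F=\tfrac12e^{-\Th/2}$. For pure states the upper Fuchs--van de Graaf inequality is an equality, which is exactly \eqref{eq:Helstrom_pure_final}, so the lower bound is tight when $\eta=1$ or $r=0$. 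At $\eta=1$ we have $\Th=4\gamma_2^2e^{2r}=\Theta_{\mathrm{in}}$, and at $r=0$ we have $\Th=4\eta N_2$ because $\zeta=0$.

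The bounds in \eqref{eq:Thbounds} are elementary estimates of the denominator $d=1-\eta+\eta e^{-2r}$:
\begin{itemize}
\item since $e^{-2r}\le1$, we have $d\le1$, which gives the lower bound on $\Th$;
\item $d\ge e^{-2r}$ because $d$ is a convex combination of $1$ and $e^{-2r}$, which gives $\Th\le4\eta\gamma_2^2e^{2r}=\eta\Theta_{\mathrm{in}}$;
\item $d\ge1-\eta$, which gives the last bound.
\end{itemize}

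Finally, \eqref{eq:dataproc} is a data-processing statement. The states $\rho_b$ are images under the channel $\Lambda_\eta$ of the pure inputs $|\psi_b\rangle$, and the trace distance is contractive under CPTP maps. Hence $\tfrac12\|\rho_0-\rho_1\|_1\le\tfrac12\||\psi_0\rangle\langle\psi_0|-|\psi_1\rangle\langle\psi_1|\|_1=\sqrt{1-|\langle\psi_0|\psi_1\rangle|^2}$. The input overlap is the $\eta=1$ value, $e^{-\Theta_{\mathrm{in}}}$, so $P_{e,2}^{\Hel}\ge\vphi(\Theta_{\mathrm{in}})$. Alternatively, the chain $P_{e,2}^{\Hel}\ge\vphi(\Th)\ge\vphi(\eta\Theta_{\mathrm{in}})\ge\vphi(\Theta_{\mathrm{in}})$, which uses monotonicity of $\vphi$, also works.

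The main obstacle is the fidelity formula itself. If citing the equal-covariance Gaussian fidelity is not acceptable, I would derive it directly. Write $\rho_0=D(\bm\delta/2)\,\tau\,D(\bm\delta/2)^\dagger$ and $\rho_1=D(-\bm\delta/2)\,\tau\,D(-\bm\delta/2)^\dagger$, where $\tau$ is the zero-mean thermal-squeezed state. Then diagonalize $\tau$ as a squeezed thermal state $S(r')\nu_{\bar n}S(r')^\dagger$, using the symplectic eigenvalue $\sqrt{\det\Sigma}$. Evaluating $\|\sqrt{\rho_0}\sqrt{\rho_1}\|_1$ then reduces to a Gaussian integral in the thermal basis, which is where most of the care (and the quadrature-normalization factor $\tfrac14$) is needed.
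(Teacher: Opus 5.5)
Your proposal is correct and follows the paper's proof essentially step for step. Both proofs use the Gaussian fidelity formula, then Fuchs--van de Graaf with \eqref{eq:Helstrom_trace_final}, then the three elementary bounds on $v_x=1-\eta+\eta e^{-2r}$, and finally trace-norm contraction of $\Lambda_\eta$ for \eqref{eq:dataproc}.

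There is one small difference in the fidelity step. The paper first applies a squeezing $S(s)$ with $e^{4s}=v_x/v_p$ to make the common covariance isotropic, $\sqrt{v_xv_p}\,\mathbf I_2$. It then quotes the displaced-thermal formula $F=\exp(-|\Delta\bm m|^2/(4\nu))$. This is equivalent to your direct equal-covariance formula $\exp(-\tfrac14\bm\delta^\top\Sigma^{-1}\bm\delta)$, because $\bm\delta^\top\Sigma^{-1}\bm\delta$ is invariant under that linear map. Your alternative chain $\vphi(\Th)\ge\vphi(\eta\Theta_{\mathrm{in}})\ge\vphi(\Theta_{\mathrm{in}})$ for \eqref{eq:dataproc} is also valid.
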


\textit{Proof:} See Appendix~\ref{app:proof_fidelity}.

\begin{remark}
The fidelity depends on $\Sigma$ only through $v_x$, and is therefore insensitive to the noise added to the anti-squeezed quadrature. The bounds \eqref{eq:FvG} are thus rigorous but not tight for mixed states. They are used for the asymptotic analysis of Section~\ref{subsec:asymptotics}, whereas the design of the squeezing allocation relies on the exact trace-norm error.
\end{remark}
\end{bl}

\subsection{\blue{Loss-Aware Photon Allocation}}
\label{subsec:allocation}

\begin{bl}
For the lossless case $\eta=1$, the partition that maximizes $\Theta_{\mathrm{in}}$ at a fixed photon budget, and hence minimizes the Helstrom error, is the known result
\begin{equation}
\zeta_{\mathrm L}(t)=\frac{N_2(t)}{2N_2(t)+1},
\label{eq:zeta_lossless}
\end{equation}
established in \cite{Chesi2018, Paris2001, Walsh2025} (a short derivation is included in Appendix~\ref{app:proof_bound}). At $\zeta_{\mathrm L}$, one has $e^{2r}=2N_2+1$, $\gamma_2^2=N_2(N_2+1)/(2N_2+1)$, and $\Theta_{\mathrm{in}}=4N_2(N_2+1)$. Under the fading-loss channel of Section~II-B, the optimal partition no longer takes this form, and it is defined as follows.

\begin{proposition}[Existence of a loss-aware allocation]
\label{prop:allocation}
For every $t\in(0,1)$, the map $\zeta\mapsto\bar P_{e,2}^{\Hel}(t,\zeta)$ is continuous on $[0,1]$. Hence the set of minimizers is non-empty and closed, and
\begin{equation}
\zeta^\star(t)\triangleq\min\arg\min_{\zeta\in[0,1]}\bar P_{e,2}^{\Hel}(t,\zeta)
\label{eq:zeta_star_def}
\end{equation}
exists. We write $\bar P_{e,2}^{\star}(t)=\bar P_{e,2}^{\Hel}(t,\zeta^\star(t))$.
\end{proposition}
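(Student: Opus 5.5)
The plan is to prove continuity of $\zeta\mapsto\bar P_{e,2}^{\Hel}(t,\zeta)$ on $[0,1]$ in two stages. First, for a fixed realization $\eta\in(0,1)$, we show that the conditional error $P_{e,2}^{\Hel}(\eta;t,\zeta)$ is continuous in $\zeta$. Then we pass continuity through the expectation over the truncated Gamma-Gamma law \eqref{eq:eta_def} by dominated convergence. Existence of a minimizer then follows from Weierstrass on the compact set $[0,1]$. The minimizer set is the preimage of the minimum value under a continuous map, hence closed. A closed nonempty subset of $[0,1]$ contains its infimum, so $\zeta^\star(t)$ as defined in \eqref{eq:zeta_star_def} exists.

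For the conditional continuity, fix $t$, so $N_2=N_2(t)>0$ is a constant. By \eqref{eq:gamma2_final}–\eqref{eq:r_final}, $\gamma_2=\sqrt{(1-\zeta)N_2}$ and $r=\operatorname{arsinh}(\sqrt{\zeta N_2})$ are continuous in $\zeta\in[0,1]$. Hence, by Lemma~\ref{lem:output}, the mean vectors $\bm m_b$ and the covariance entries $v_x,v_p$ of $\rho_b$ depend continuously on $\zeta$. The key step is to show that Gaussian states depend continuously, in trace norm, on their first and second moments. Then the reverse triangle inequality gives
\[
\big|\,\|\rho_1-\rho_0\|_1-\|\rho_1'-\rho_0'\|_1\big|\le\|\rho_0-\rho_0'\|_1+\|\rho_1-\rho_1'\|_1 ,
\]
which makes \eqref{eq:Pe2_cond_final} continuous in $\zeta$.

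I would obtain the moment-continuity cleanly without generic Gaussian-state theory. Write $\rho_b=\Lambda_\eta\big(D(\pm\gamma_2)S(r)|0\rangle\langle0|S(r)^\dagger D(\pm\gamma_2)^\dagger\big)$. The channel $\Lambda_\eta$ is trace-norm contractive, so it suffices to show that the pure input states $|\psi_b\rangle$ vary continuously in Hilbert-space norm. For pure states, $\||\psi\rangle\langle\psi|-|\phi\rangle\langle\phi|\|_1\le2\||\psi\rangle-|\phi\rangle\|$. Now, $D(\alpha)$ and $S(r)$ are strongly continuous unitary families in their parameters, being one-parameter groups generated by self-adjoint operators in the appropriate sense. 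Combined with uniform boundedness, this gives continuity of $D(\pm\gamma_2)S(r)|0\rangle$ in $(\gamma_2,r)$. Composing with the continuous parametrization yields the conditional continuity, uniformly over all $\eta\in[0,1]$.

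The averaging step is then straightforward. Since $0\le P_{e,2}^{\Hel}\le\tfrac12$, the constant $\tfrac12$ is an integrable dominating function under \eqref{eq:eta_def}, a probability density. For any $\zeta_n\to\zeta$, dominated convergence gives $\bar P_{e,2}^{\Hel}(t,\zeta_n)\to\bar P_{e,2}^{\Hel}(t,\zeta)$. In fact, the uniform-in-$\eta$ bound from the contraction argument gives this directly. I expect the main obstacle to be the strong-continuity claim for the squeezing family at the endpoint behaviour in $r$. For this I would rely on the standard fact that $r\mapsto S(r)$ is a strongly continuous one-parameter unitary group, by Stone's theorem, since $S(r+s)=S(r)S(s)$ for fixed phase. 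If needed, continuity of $|\psi_b\rangle$ can be checked instead via the explicit overlap formula for Gaussian pure states: $\||\psi\rangle-|\phi\rangle\|^2=2-2\,\mathrm{Re}\langle\psi|\phi\rangle$, which is continuous in the parameters and tends to $0$.
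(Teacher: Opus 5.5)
Your proposal is correct and follows the paper's proof essentially step for step. It uses norm-continuity of $|\psi_b(\zeta)\rangle$ from the continuous parametrization and strong continuity of the unitary groups $D(\gamma)$, $S(r)$, the bound $\||\psi\rangle\langle\psi|-|\phi\rangle\langle\phi|\|_1\le 2\||\psi\rangle-|\phi\rangle\|$, trace-norm contractivity of $\Lambda_\eta$, averaging over $\eta_2$, and Weierstrass plus closedness on $[0,1]$. Two small differences: you note that the contraction bound is uniform in $\eta$, so the averaging step does not even need dominated convergence, and your uniform-boundedness argument for joint continuity of $D(\gamma)S(r)|0\rangle$ replaces the paper's compactness step.
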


\textit{Proof:} See Appendix~\ref{app:proof_allocation}.

We do not claim uniqueness, and $\zeta^\star(t)$ is not available in closed form: the error \eqref{eq:Pe2_cond_final} is a trace-norm functional of mixed Gaussian states, which Proposition~\ref{prop:fidelity} brackets without determining it. The allocation $\zeta^\star(t)$ is computed by a one-dimensional numerical minimization of the exact averaged error (Section~\ref{sec:numerical}). 
\end{bl}

\subsection{\blue{Exact Series Representation}}
\label{subsec:series_representation}

\begin{bl}
Let $\mathcal L_{I}(s)\triangleq\int_0^\infty e^{-sx}f_I(x)\,dx$ denote the Laplace transform of the (untruncated) Gamma-Gamma irradiance, and let $b_k=(-1)^{k+1}\binom{1/2}{k}>0$ be the coefficients of the binomial series $1-\sqrt{1-z}=\sum_{k\ge1}b_kz^k$ \cite[\S1.2(iv)]{DLMF}. Then $\vphi(u)=\tfrac12\sum_{k\ge1}b_ke^{-ku}$ for $u>0$ and $\sum_{k\ge1}b_k=1$. For the truncated density, the mass of the untruncated law above $1/h_{\ell,i}$ enters through the nonnegative terms
\begin{IEEEeqnarray}{cCl}
\nonumber
\tau_i(s)&\triangleq&\frac{1}{F_i}\int_{1/h_{\ell,i}}^{\infty}e^{-sh_{\ell,i}x}f_{I_i}(x)\,dx,\qquad \\
\tau_i^{\vphi}(\Theta) &\triangleq& \frac{1}{F_i}\int_{1/h_{\ell,i}}^{\infty}\vphi(\Theta h_{\ell,i}x)\,f_{I_i}(x)\,dx .
\label{eq:tau_def}
\end{IEEEeqnarray}

\begin{proposition}[Exact Laplace-series representation]
\label{prop:series_laplace}
For every $s\ge0$, we have
\begin{equation}
\mathbb E[e^{-s\eta_i}]=\frac{1}{F_i}\mathcal L_{I_i}(h_{\ell,i}s)-\tau_i(s),\qquad 0\le\tau_i(s)\le\frac{1-F_i}{F_i}\,e^{-s}.
\label{eq:Laplace_trunc}
\end{equation}
Consequently, for every $\Theta\ge0$,
\begin{equation}
\mathbb E\big[\vphi(\Theta\,\eta_i)\big]
=\frac{1}{2F_i}\sum_{k=1}^{\infty}b_k\,\mathcal L_{I_i}\big(k\,h_{\ell,i}\Theta\big)-\tau_i^{\vphi}(\Theta),
\label{eq:series_general}
\end{equation}
with $0\le\tau_i^{\vphi}(\Theta)\le\frac{1-F_i}{F_i}\vphi(\Theta)$. In particular, the first-hop error and the coherent second-hop error ($\zeta=0$) are
\begin{align}
\bar P_{e,1}^{\Hel}&=\frac{1}{2F_1}\sum_{k\ge1}b_k\,\mathcal L_{I_1}(4ak\,h_{\ell,1})-\tau_1^{\vphi}(4a),\label{eq:Pe1_series_final}\\
\bar P_{e,2}^{\mathrm{coh}}(t)&=\frac{1}{2F_2}\sum_{k\ge1}b_k\,\mathcal L_{I_2}(4kN_2(t)\,h_{\ell,2})-\tau_2^{\vphi}(4N_2(t)).\label{eq:Pe_coh_final}
\end{align}
\end{proposition}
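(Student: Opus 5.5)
The plan is to work directly from the truncated density \eqref{eq:eta_def} and change variables $\eta=h_{\ell,i}x$. For any bounded measurable $g\ge 0$ on $[0,\infty)$ this gives
\begin{equation*}
\mathbb E[g(\eta_i)]=\frac{1}{F_i}\int_0^{1/h_{\ell,i}}g(h_{\ell,i}x)f_{I_i}(x)\,dx
=\frac{1}{F_i}\int_0^{\infty}g(h_{\ell,i}x)f_{I_i}(x)\,dx-\frac{1}{F_i}\int_{1/h_{\ell,i}}^{\infty}g(h_{\ell,i}x)f_{I_i}(x)\,dx .
\end{equation*}
Both integrals are finite because $g$ is bounded and $f_{I_i}$ is a probability density. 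Taking $g(\eta)=e^{-s\eta}$ yields \eqref{eq:Laplace_trunc}, since the first integral is $\mathcal L_{I_i}(h_{\ell,i}s)$ and the second is $\tau_i(s)$ by \eqref{eq:tau_def}. For the bound on $\tau_i(s)$, note that on $x>1/h_{\ell,i}$ we have $h_{\ell,i}x>1$. Hence $0\le e^{-sh_{\ell,i}x}\le e^{-s}$ for $s\ge0$, and the remaining mass $\int_{1/h_{\ell,i}}^\infty f_{I_i}=1-F_i$ gives $0\le\tau_i(s)\le\frac{1-F_i}{F_i}e^{-s}$.

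For \eqref{eq:series_general}, I take $g(\eta)=\vphi(\Theta\eta)$, which is bounded by $1/2$. The same decomposition gives
\[
\mathbb E[\vphi(\Theta\eta_i)]=\frac{1}{F_i}\int_0^\infty\vphi(\Theta h_{\ell,i}x)f_{I_i}(x)\,dx-\tau_i^{\vphi}(\Theta).
\]
I then insert the binomial expansion $\vphi(u)=\tfrac12\sum_k b_ke^{-ku}$, valid for $u>0$, and exchange sum and integral. The main technical point is justifying this interchange. Since $b_k>0$ and $e^{-ku}\ge0$, Tonelli's theorem applies to the nonnegative series, with no dominated-convergence argument needed. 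The exchange produces $\frac{1}{2F_i}\sum_k b_k\mathcal L_{I_i}(kh_{\ell,i}\Theta)$, and this sum is finite because it is bounded by $\frac12\sum_k b_k=\frac12$.

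Two boundary cases need separate checks. The expansion fails at $u=0$, i.e. $x=0$ or $\Theta=0$, but $x=0$ is a null set under $f_{I_i}$. For $\Theta=0$ the formula still holds, since $\sum_k b_k=1$ gives $\frac{1}{2F_i}$, which equals $\frac12$ plus the truncation correction $\tau_i^{\vphi}(0)=\frac{1-F_i}{2F_i}$. The bound on $\tau_i^{\vphi}$ follows from the fact, noted after \eqref{eq:phi_def}, that $\vphi$ is decreasing. Thus $\vphi(\Theta h_{\ell,i}x)\le\vphi(\Theta)$ whenever $h_{\ell,i}x\ge1$, and integrating against the tail mass $1-F_i$ gives $0\le\tau_i^{\vphi}(\Theta)\le\frac{1-F_i}{F_i}\vphi(\Theta)$.

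Finally, the specializations follow directly. By \eqref{eq:firsthop_average_error_final}, $\bar P_{e,1}^{\Hel}=\mathbb E[\vphi(4a\eta_1)]$, so setting $\Theta=4a$ and $i=1$ gives \eqref{eq:Pe1_series_final}. For $\zeta=0$ we have $r=0$, and Proposition~\ref{prop:fidelity} gives $P_{e,2}^{\Hel}=\vphi(4\eta N_2)$. Averaging over the truncated law with $\Theta=4N_2(t)$ and $i=2$ then gives \eqref{eq:Pe_coh_final}.
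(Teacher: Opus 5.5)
Your proof is correct and follows essentially the same route as the paper: you split off the truncation tail, expand $\vphi$ in the binomial series with positive coefficients, and use Tonelli's theorem for the interchange. The differences are minor. You bound $\tau_i^{\vphi}$ directly from \eqref{eq:tau_def} via the monotonicity of $\vphi$ on the tail $h_{\ell,i}x\ge1$, whereas the paper writes $\tau_i^{\vphi}(\Theta)=\frac12\sum_k b_k\tau_i(k\Theta)$ and sums the termwise bound on $\tau_i$. Also, the expansion does not actually fail at $u=0$: since $\sum_k b_k=1$, it is exact there, which is exactly what your $\Theta=0$ check confirms.
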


\textit{Proof:} See Appendix~\ref{app:proof_prop_series}.

The series in \eqref{eq:series_general} without the correction $\tau_i^{\vphi}$ is thus an upper bound that lies within $\frac{1-F_i}{F_i}\vphi(\Theta)$ of the exact value. Since $\mathcal L_{I_i}(h_{\ell,i}k\Theta)$ decays only polynomially in $\Theta$ (Lemma~\ref{lem:laplace_asympt}), whereas $\tau_i^{\vphi}(\Theta)=O(e^{-\Theta})$, the relative effect of the correction vanishes as $\Theta\to\infty$. Since the states are pure for $\zeta=0$, $\bar P_{e,2}^{\mathrm{coh}}(t)=\bar P_{e,2}^{\Hel}(t,0)$; Theorem~\ref{thm:asymptotics} writes the latter as a function of $N_2$.
\end{bl}

\subsection{\blue{Covariance-Preserving Benchmark}}
\label{subsec:cp_benchmark}

\begin{bl}
To quantify the effect of the vacuum noise entering the loss port, we compare the physical DSS hop with a  channel that attenuates the displacement but leaves the covariance matrix of the lossless state unchanged. We call it the CP benchmark,
\begin{equation}
\begin{aligned}
P_{e,2}^{\mathrm{cp}}(\eta;t,\zeta)&\triangleq\vphi\big(\eta\,\Theta_{\mathrm{in}}(t,\zeta)\big),\\
\bar P_{e,2}^{\mathrm{cp}}(t,\zeta)&=\mathbb E\big[P_{e,2}^{\mathrm{cp}}(\eta_2;t,\zeta)\big].
\end{aligned}
\label{eq:cp_def}
\end{equation}
It coincides with the exact error for $\eta=1$ or $r=0$, but it is {not} the output of the pure-loss channel \eqref{eq:BS}, since it ignores the vacuum noise $(1-\eta)\mathbf I_2$. It is used only as a benchmark. By \eqref{eq:Thbounds}, $\Th\le\eta\Theta_{\mathrm{in}}$, hence
\begin{equation}
P_{e,2}^{\mathrm{cp}}(\eta;t,\zeta)\le\vphi(\Th)\le P_{e,2}^{\Hel}(\eta;t,\zeta)
\label{eq:cp_lower}
\end{equation}
for all $\eta,\zeta$. By \eqref{eq:series_general}, $\bar P_{e,2}^{\mathrm{cp}}(t,\zeta)=\frac{1}{2F_2}\sum_kb_k\mathcal L_{I_2}(kh_{\ell,2}\Theta_{\mathrm{in}})-\tau_2^{\vphi}(\Theta_{\mathrm{in}})$, which, at $\zeta_{\mathrm L}$, depends on the photon budget only through $\Theta_{\mathrm{in}}=4N_2(N_2+1)$.  For the considered DSS ensemble, covariance preservation could be realized through an attenuator coupled to a squeezed-vacuum environment whose covariance matches that of the input states. Such a configuration, however, requires an additional quantum resource and an environment matched to the transmitted squeezing parameter, which are not available in the proposed EH relay architecture. Hence, the CP model is considered solely as an idealized benchmark.
\end{bl}

\subsection{\blue{Asymptotic Reliability under Loss}}
\label{subsec:asymptotics}

\begin{bl}
Let $g_2\triangleq\min\{\alpha_2,\beta_2\}$ denote the turbulence-induced diversity order of the second hop, and assume $\alpha_2\ne\beta_2$ (the case $\alpha_2=\beta_2$ adds a logarithmic correction and is excluded).
\end{bl}

\begin{bl}\begin{lemma}[Gamma-Gamma Laplace Asymptotics]
\label{lem:laplace_asympt}
Let $I$ be Gamma-Gamma distributed with parameters $(\alpha,\beta)$, $\alpha\ne\beta$,
and let $g=\min\{\alpha,\beta\}$. Then the density satisfies $f_I(x)\sim c_0x^{g-1}$ as $x\to0^+$, with
$c_0=\Gamma(|\alpha-\beta|)(\alpha\beta)^g/[\Gamma(\alpha)\Gamma(\beta)]$, and its Laplace transform satisfies
\begin{equation}
\mathcal{L}_I(s)
=
c_\infty s^{-g}\big(1+o(1)\big),
\qquad s\to\infty,\qquad c_\infty=c_0\Gamma(g).
\label{eq:laplace_asympt_final}
\end{equation}
\end{lemma}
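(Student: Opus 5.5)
The proof has two parts. First, the small-argument behaviour of the Bessel function gives $f_I(x)\sim c_0x^{g-1}$ as $x\to0^+$. Second, a Tauberian-free Abelian (Watson-type) argument converts this local behaviour into the large-$s$ decay of $\mathcal L_I(s)$.

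\emph{Density near the origin.} I use the symmetry $K_\nu=K_{-\nu}$ and write $\nu=|\alpha-\beta|>0$. The standard asymptotic $K_\nu(z)\sim\tfrac12\Gamma(\nu)(z/2)^{-\nu}$ as $z\to0^+$, valid for $\nu>0$, then applies. With $z=2\sqrt{\alpha\beta x}$ this gives $K_\nu(2\sqrt{\alpha\beta x})\sim\tfrac12\Gamma(\nu)(\alpha\beta x)^{-\nu/2}$. Substituting into \eqref{eq:gamma_gamma_pdf_final}, the prefactor is $2(\alpha\beta)^{(\alpha+\beta)/2}/[\Gamma(\alpha)\Gamma(\beta)]$. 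The power of $x$ becomes $\tfrac{\alpha+\beta}{2}-1-\tfrac{|\alpha-\beta|}{2}=g-1$, and the power of $\alpha\beta$ becomes $\tfrac{\alpha+\beta-|\alpha-\beta|}{2}=g$. The factors $2$ and $\tfrac12$ cancel, which yields exactly $c_0=\Gamma(|\alpha-\beta|)(\alpha\beta)^g/[\Gamma(\alpha)\Gamma(\beta)]$. This is where $\alpha\ne\beta$ enters: for $\nu=0$ the Bessel function behaves like $-\ln z$ and produces the excluded logarithmic correction.

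\emph{Laplace asymptotics.} I split $\mathcal L_I(s)=\int_0^\delta e^{-sx}f_I(x)\,dx+\int_\delta^\infty e^{-sx}f_I(x)\,dx$. Given $\epsilon>0$, I choose $\delta$ so that $|f_I(x)-c_0x^{g-1}|\le\epsilon c_0x^{g-1}$ on $(0,\delta)$. The tail term is at most $e^{-s\delta}$, since $f_I$ is a probability density, and this is $o(s^{-g})$. For the near-origin term I compare with $c_0\int_0^\delta e^{-sx}x^{g-1}\,dx$. This integral equals $c_0\Gamma(g)s^{-g}-c_0\int_\delta^\infty e^{-sx}x^{g-1}\,dx$, and the last integral is exponentially small in $s$. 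The near-origin term therefore lies within $(1\pm\epsilon)c_0\Gamma(g)s^{-g}+O(e^{-s\delta/2})$. Multiplying by $s^g$, letting $s\to\infty$ and then $\epsilon\to0$ gives $s^g\mathcal L_I(s)\to c_0\Gamma(g)=c_\infty$, which is \eqref{eq:laplace_asympt_final}. This argument is Watson's lemma in its leading-order form, and it needs only integrability of $f_I$ and the local equivalence from the first step.

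\emph{Main obstacle.} The only delicate point is handling the Bessel asymptotic rigorously, in particular choosing the correct branch $\nu=|\alpha-\beta|$ so that the dominant term is singled out. The subleading term of $K_\nu$ near zero is of relative order $z^{\min(2,2\nu)}$, with a logarithm when $\nu$ is an integer. It vanishes as $z\to0$ and so does not affect the leading term. I will cite the standard expansion of $K_\nu$ for $\nu>0$ from \cite{DLMF} rather than rederive it. The rest of the proof is routine.
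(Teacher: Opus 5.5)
Your proof is correct. Its first step, using $K_{-\nu}=K_\nu$ with $\nu=|\alpha-\beta|$ and then the small-argument form of $K_\nu$, is exactly the paper's derivation of $f_I(x)\sim c_0x^{g-1}$.

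The Laplace step takes a different route from the paper. The paper substitutes $y=sx$ and writes $s^g\mathcal L_I(s)=\int_0^\infty e^{-y}y^{g-1}q(y/s)\,dy$ with $q(x)=f_I(x)x^{1-g}$. It then applies dominated convergence. This requires showing that $q$ is bounded on all of $(0,\infty)$, so the paper also needs continuity of $q$ and the large-argument decay $K_\nu(z)\sim\sqrt{\pi/(2z)}\,e^{-z}$.

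Your $\delta$-splitting is Watson's lemma at leading order. It bounds the tail by $e^{-s\delta}$ using only $\int f_I=1$, so it needs no information about $f_I$ away from the origin. It therefore proves a general Abelian statement, valid for any probability density with $f(x)\sim c_0x^{g-1}$ as $x\to0^+$.

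The paper's version is more compact, because dominated convergence absorbs the $\epsilon$--$\delta$ bookkeeping. Yours is more elementary and uses fewer properties of the Bessel function. Your remarks on the subleading terms of $K_\nu$ are accurate but not needed for the leading-order claim.
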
\end{bl}
\begin{IEEEproof}
    See Appendix~\ref{app:proof_gamma_gamma_asympt}.
\end{IEEEproof}

\begin{bl}
Let $h=h_{\ell,2}$ and define
\begin{equation}
K\triangleq c_0\int_0^\infty u^{g_2-1}\vphi(u)\,du=\frac{c_\infty}{2}\sum_{k=1}^\infty b_k\,k^{-g_2},\qquad \Kb\triangleq\frac{K}{F_2}.
\label{eq:K_def}
\end{equation}

\begin{lemma}[Asymptotics of the averaged error]
\label{lem:avg_asympt}
For the second-hop transmissivity $\eta_2$ of \eqref{eq:eta_def}, we have
\begin{equation}
\mathbb E\big[\vphi(\Theta\,\eta_2)\big]=\Kb\,(h\Theta)^{-g_2}\big(1+o(1)\big),\qquad\Theta\to\infty.
\label{eq:avg_asympt}
\end{equation}
\end{lemma}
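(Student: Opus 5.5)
We prove Lemma~\ref{lem:avg_asympt} by writing the expectation against the untruncated density and changing variables so that the small-irradiance behavior of Lemma~\ref{lem:laplace_asympt} dominates. By \eqref{eq:eta_def}, with $h=h_{\ell,2}$,
\begin{equation*}
\mathbb E\big[\vphi(\Theta\eta_2)\big]=\frac{1}{F_2}\int_0^{1/h}\vphi(\Theta h x)\,f_{I_2}(x)\,dx .
\end{equation*}
We substitute $u=\Theta h x$, so that the right-hand side equals
\begin{equation*}
\frac{(h\Theta)^{-1}}{F_2}\int_0^{\Theta}\vphi(u)\,f_{I_2}\!\big(u/(h\Theta)\big)\,du .
\end{equation*}
Multiplying by $(h\Theta)^{g_2}$, the claim reduces to showing that
\begin{equation*}
J(\Theta)\triangleq\int_0^{\Theta}\vphi(u)\,(h\Theta)^{g_2-1}f_{I_2}\!\big(u/(h\Theta)\big)\,du\;\longrightarrow\; c_0\int_0^\infty u^{g_2-1}\vphi(u)\,du=K .
\end{equation*}
Pointwise, Lemma~\ref{lem:laplace_asympt} gives $(h\Theta)^{g_2-1}f_{I_2}(u/(h\Theta))\to c_0u^{g_2-1}$ for each fixed $u>0$, so the integrand converges to $c_0u^{g_2-1}\vphi(u)$.

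The main step is justifying the exchange of limit and integral, which we do by dominated convergence. Since $f_{I_2}(x)\sim c_0x^{g_2-1}$ as $x\to0^+$ and $f_{I_2}$ is continuous on $(0,1/h]$, there is a constant $C$ with $f_{I_2}(x)\le Cx^{g_2-1}$ on $(0,1/h]$. This holds even when $g_2<1$, because $x^{g_2-1}$ is then bounded below on the compact range. Hence on $[0,\Theta]$ the integrand is bounded by $C u^{g_2-1}\vphi(u)\le \tfrac{C}{2}u^{g_2-1}e^{-u}$, using $\vphi(u)\le\tfrac12e^{-u}$ from \eqref{eq:phi_def}. This dominating function is integrable on $(0,\infty)$ because $g_2>0$. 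Dominated convergence then yields $J(\Theta)\to K$, and dividing by $F_2$ gives the constant $\Kb=K/F_2$ in \eqref{eq:avg_asympt}.

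It remains to check the second form of $K$ in \eqref{eq:K_def}. We insert $\vphi(u)=\tfrac12\sum_k b_ke^{-ku}$ and integrate term by term, which is allowed by Tonelli since $b_k>0$. Each term contributes $\int_0^\infty u^{g_2-1}e^{-ku}\,du=\Gamma(g_2)k^{-g_2}$, so
\begin{equation*}
K=\frac{c_0\Gamma(g_2)}{2}\sum_{k\ge1}b_kk^{-g_2}=\frac{c_\infty}{2}\sum_{k\ge1}b_kk^{-g_2},
\end{equation*}
and the sum converges since $\sum_k b_k=1$. This recovers \eqref{eq:K_def} and completes the proof.

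An alternative route would use the series \eqref{eq:series_general} together with Lemma~\ref{lem:laplace_asympt} termwise, noting that the correction $\tau_2^{\vphi}(\Theta)=O(e^{-\Theta})$ is negligible. However, the uniformity over $k$ needed for that route is less direct, so we use the dominated-convergence argument above.
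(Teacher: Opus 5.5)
Your proof is correct, but it follows a different route from the paper's. The paper works at the level of the series. It starts from Proposition~\ref{prop:series_laplace}, writing $\mathbb E[\vphi(\Theta\eta_2)]=\frac{1}{2F_2}\sum_k b_k\mathcal L_{I_2}(kh\Theta)-\tau_2^{\vphi}(\Theta)$, and discards the truncation correction using $0\le\tau_2^{\vphi}(\Theta)\le\frac{1-F_2}{2F_2}e^{-\Theta}$. It then passes to the limit term by term with the Laplace asymptotics of Lemma~\ref{lem:laplace_asympt}. The uniformity in $k$ that you found less direct takes one line there. The asymptotic bound for $s\ge s_0$ together with $\mathcal L_{I_2}\le 1$ gives $\mathcal L_{I_2}(s)\le C's^{-g_2}$ for all $s>0$, hence the summable majorant $C'b_k$. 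That route produces the constant directly as $\frac{c_\infty}{2F_2}\sum_k b_k k^{-g_2}$, the form tied to the series used numerically, and recovers the integral form of $K$ by Tonelli.

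You instead apply dominated convergence to the integral itself after the substitution $u=\Theta h x$. This is more elementary and more self-contained in three ways. First, the truncation is absorbed exactly into the upper limit $\Theta$ of integration, so no correction term $\tau_2^{\vphi}$ is needed. Second, you use only the small-argument behavior $f_{I_2}(x)\sim c_0x^{g_2-1}$, not the Laplace-transform asymptotics. Third, your majorant only needs $f_{I_2}(x)x^{1-g_2}$ bounded on $(0,1/h]$, which follows from continuity and the finite limit at the origin. The paper's route instead inherits from Lemma~\ref{lem:laplace_asympt} boundedness on all of $(0,\infty)$, which requires the large-argument decay of $K_\nu$. Your argument also uses the series structure of $\vphi$ only to identify the two forms of $K$. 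It would therefore apply to any error function for which $u^{g_2-1}\vphi(u)$ has an integrable envelope.
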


\textit{Proof:} See Appendix~\ref{app:proof_avg_asympt}.

\begin{theorem}[Asymptotic reliability under physical loss]
\label{thm:asymptotics}
As $N_2\to\infty$:
\begin{enumerate}[label=(\alph*),leftmargin=2em]
\item $\bar P_{e,2}^{\star}(N_2)\sim\bar P_{e,2}^{\Hel}(N_2,0)\sim \Kb\,(4hN_2)^{-g_2}$. That is, even optimally chosen squeezing changes neither the turbulence-induced diversity order $g_2$ of coherent relaying nor its leading constant $\Kb$;
\item every minimizer $\zeta^\star(N_2)$ satisfies $\zeta^\star(N_2)\to0$.
\end{enumerate}
\end{theorem}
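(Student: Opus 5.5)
The plan is to squeeze $\bar P_{e,2}^{\star}(N_2)$ between two quantities that share the asymptotics $\Kb(4hN_2)^{-g_2}$. Part (b) then follows from the same lower bound, made slightly sharper when $\zeta$ stays away from zero.

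\emph{Upper bound.} Since $\zeta=0$ is feasible, $\bar P_{e,2}^{\star}(N_2)\le\bar P_{e,2}^{\Hel}(N_2,0)$. For $r=0$ the states are pure, and Proposition~\ref{prop:fidelity} gives $P_{e,2}^{\Hel}=\vphi(4\eta N_2)$. Hence $\bar P_{e,2}^{\Hel}(N_2,0)=\mathbb E[\vphi(4N_2\eta_2)]$. Lemma~\ref{lem:avg_asympt} with $\Theta=4N_2$ gives $\bar P_{e,2}^{\Hel}(N_2,0)\sim\Kb(4hN_2)^{-g_2}$. This also settles the second equivalence in (a).

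\emph{Lower bound, uniform in $\zeta$.} By \eqref{eq:FvG} and the last inequality of \eqref{eq:Thbounds}, for every $\zeta$ and every $\eta<1$,
\[
P_{e,2}^{\Hel}(\eta;N_2,\zeta)\ge\vphi\!\Big(\frac{4(1-\zeta)N_2\,\eta}{1-\eta}\Big)\ge\vphi\!\Big(\frac{4N_2\,\eta}{1-\eta}\Big),
\]
where the second step uses that $\vphi$ is decreasing. Thus it suffices to show
\[
J_c(N_2)\triangleq\mathbb E\big[\vphi(4cN_2\eta_2/(1-\eta_2))\big]\sim\Kb(4chN_2)^{-g_2}\qquad\text{for each fixed } c\in(0,1].
\]
This is the main technical step. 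I would write the expectation via \eqref{eq:eta_def} as $\frac{1}{hF_2}\int_0^1\vphi\big(4cN_2\eta/(1-\eta)\big)f_{I_2}(\eta/h)\,d\eta$ and substitute $\eta=u/(4cN_2)$. After multiplying by $(4chN_2)^{g_2}$, the integrand becomes
\[
\frac{1}{F_2}\,\vphi\!\Big(\frac{u}{1-u/(4cN_2)}\Big)\,(u/(4cN_2h))^{1-g_2}f_{I_2}\big(u/(4cN_2h)\big)\,u^{g_2-1},
\]
restricted to $u<4cN_2$. By Lemma~\ref{lem:laplace_asympt} it converges pointwise to $c_0u^{g_2-1}\vphi(u)/F_2$, whose integral is $K/F_2=\Kb$ by \eqref{eq:K_def}.

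To justify the limit I would split the range at $\eta=1/2$. On $\eta\le1/2$ the argument of $\vphi$ is at least $u$, so $\vphi(\cdot)\le\vphi(u)\le\tfrac12 e^{-u}$. Moreover $x^{1-g_2}f_{I_2}(x)$ is bounded on $x\in[0,1/(2h)]$ because $f_{I_2}(x)\sim c_0x^{g_2-1}$ and $f_{I_2}$ is continuous. Dominated convergence with majorant $Cu^{g_2-1}e^{-u}$ then applies. On $\eta>1/2$ the argument exceeds $4cN_2$, so this part is $O(e^{-4cN_2})=o(N_2^{-g_2})$.

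Combining the two bounds with $c=1$ gives $\Kb(4hN_2)^{-g_2}(1+o(1))\le\bar P_{e,2}^{\star}\le\bar P_{e,2}^{\Hel}(N_2,0)=\Kb(4hN_2)^{-g_2}(1+o(1))$, which proves (a). The only real obstacle is the domination argument above. In particular, the vacuum-noise factor $1/(1-\eta)$ must be harmless precisely in the regime $\eta\sim1/N_2$ that controls the diversity order, and the split at $\eta=1/2$ is what handles this.

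\emph{Part (b).} I would argue by contradiction. Suppose some sequence $N_2\to\infty$ has minimizers with $\zeta^\star(N_2)\ge\delta>0$. The first displayed inequality then yields $\bar P_{e,2}^{\star}\ge J_{1-\delta}(N_2)\sim(1-\delta)^{-g_2}\Kb(4hN_2)^{-g_2}$. Since $(1-\delta)^{-g_2}>1$, this exceeds $\bar P_{e,2}^{\Hel}(N_2,0)$ for large $N_2$, contradicting minimality. Hence $\zeta^\star(N_2)\to0$.
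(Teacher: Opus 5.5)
Your proof is correct, and its skeleton matches the paper's:
\begin{itemize}
\item the upper bound comes from the admissible choice $\zeta=0$ together with Lemma~\ref{lem:avg_asympt};
\item the lower bound, uniform in $\zeta$, comes from \eqref{eq:FvG} and $\Th\le4\eta(1-\zeta)N_2/(1-\eta)$;
\item part (b) follows by showing that any $\zeta$ bounded away from $0$ is beaten by coherent relaying.
\end{itemize}

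The difference is in how the vacuum-noise factor $1/(1-\eta)$ is neutralized. The paper never computes the asymptotics of your $J_c$. Instead it cuts at an arbitrary $\delta\in(0,1)$ and replaces $\eta/(1-\eta)$ by $\eta/(1-\delta)$ on $\{\eta<\delta\}$. It bounds the contribution of $\{\eta\ge\delta\}$ by $\frac12e^{-\Theta\delta}$ and applies Lemma~\ref{lem:avg_asympt} as a black box at $\Theta=4(1-\epsilon)N_2/(1-\delta)$. This yields only the ratio bound $((1-\delta)/(1-\epsilon))^{g_2}-o(1)$. Part (a) therefore needs a second limit $\delta\to0$, and part (b) needs the choice $\delta=\epsilon/2$ to obtain a constant strictly above $1$.

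You instead prove the sharp statement $J_c(N_2)\sim\Kb(4chN_2)^{-g_2}$ by a fresh dominated-convergence argument on the density itself. It uses only $f_{I_2}(x)\sim c_0x^{g_2-1}$ near the origin and continuity of $f_{I_2}$, and it bypasses the Laplace series. Your change of variables and your domination are both valid: the majorant $Cu^{g_2-1}e^{-u}$ on $\{\eta\le\frac12\}$ is independent of $N_2$, and the tail on $\{\eta>\frac12\}$ is $O(e^{-4cN_2})$.

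The two routes trade off as follows. The paper's is more modular: it reuses an already proved lemma at the cost of an auxiliary parameter and a double limit. Yours costs one extra limit-interchange argument, but it establishes the stronger fact that the $1/(1-\eta)$ inflation is asymptotically invisible. That makes (a) a direct sandwich and (b) a one-line comparison with the constant $(1-\delta)^{-g_2}>1$.
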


\textit{Proof:} See Appendix~\ref{app:proof_asymptotics}.

\begin{corollary}[CP benchmark exponent]
\label{cor:benchmark}
For the CP benchmark at the lossless partition, we have
\begin{equation}
\begin{aligned}
\bar P_{e,2}^{\mathrm{cp}}\big(t,\zeta_{\mathrm L}(t)\big)&\sim \Kb\,\big(4hN_2(N_2+1)\big)^{-g_2}\\
&\sim \Kb(4hN_2^2)^{-g_2},\qquad N_2\to\infty.
\end{aligned}
\end{equation}
\end{corollary}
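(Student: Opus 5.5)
The corollary should follow directly from the definition of the CP benchmark combined with Lemma~\ref{lem:avg_asympt}. By \eqref{eq:cp_def}, the CP error is $\bar P_{e,2}^{\mathrm{cp}}(t,\zeta)=\mathbb E[\vphi(\eta_2\Theta_{\mathrm{in}}(t,\zeta))]$. This is exactly the averaged quantity $\mathbb E[\vphi(\Theta\,\eta_2)]$ treated in Lemma~\ref{lem:avg_asympt}, with the deterministic argument $\Theta=\Theta_{\mathrm{in}}$. The plan is therefore to substitute the lossless partition, identify $\Theta$ explicitly as a function of $N_2$, and then pass to the limit.

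First, I will evaluate $\Theta_{\mathrm{in}}$ at $\zeta_{\mathrm L}(t)$ from \eqref{eq:zeta_lossless}. At this partition $\sinh^2 r=N_2^2/(2N_2+1)$, which gives $e^{2r}=2N_2+1$ and $\gamma_2^2=(1-\zeta_{\mathrm L})N_2=N_2(N_2+1)/(2N_2+1)$. Plugging into \eqref{eq:Theta_in} yields $\Theta_{\mathrm{in}}=4\gamma_2^2e^{2r}=4N_2(N_2+1)$, as already recorded after \eqref{eq:zeta_lossless}. It is worth stressing that this $\Theta$ is deterministic: it depends on $t$ only through $N_2(t)$ and does not depend on $\eta_2$. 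Also, $N_2\to\infty$ forces $\Theta_{\mathrm{in}}\to\infty$.

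Second, I will apply Lemma~\ref{lem:avg_asympt} along the sequence $\Theta=4N_2(N_2+1)\to\infty$. This gives $\bar P_{e,2}^{\mathrm{cp}}(t,\zeta_{\mathrm L}(t))=\Kb\,(4hN_2(N_2+1))^{-g_2}(1+o(1))$, which is the first asymptotic equivalence in the corollary. The lemma is a statement as $\Theta\to\infty$ through the reals, so composing it with any divergent function of $N_2$ is legitimate.

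Third, the second equivalence follows because $(N_2(N_2+1))^{-g_2}/(N_2^2)^{-g_2}=(1+1/N_2)^{-g_2}\to1$, and $\sim$ is transitive.

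There is no real obstacle here, since all the analytic work lives in Lemma~\ref{lem:avg_asympt}. The only care point is to note that the truncation corrections are already absorbed into $\Kb=K/F_2$ and need no separate treatment. Finally, comparing the result with Theorem~\ref{thm:asymptotics}(a) shows that the exponent $2g_2$ here is double the $g_2$ obtained for the physical channel.
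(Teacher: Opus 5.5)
Your proposal is correct and follows the paper's proof, which simply applies Lemma~\ref{lem:avg_asympt} with $\Theta=\Theta_{\mathrm{in}}(t,\zeta_{\mathrm L})=4N_2(N_2+1)$ in \eqref{eq:cp_def}. Your explicit check that $e^{2r}=2N_2+1$ at $\zeta_{\mathrm L}$, and the final step $(1+1/N_2)^{-g_2}\to1$, spell out details that the paper leaves implicit.
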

\begin{IEEEproof}
Apply Lemma~\ref{lem:avg_asympt} with $\Theta=\Theta_{\mathrm{in}}(t,\zeta_{\mathrm L})=4N_2(N_2+1)$ in \eqref{eq:cp_def}.
\end{IEEEproof}

\begin{remark}
Theorem~\ref{thm:asymptotics} and Corollary~\ref{cor:benchmark} together locate the effect of the loss. If squeezing were preserved, the diversity order would double ($2g_2$ instead of $g_2$), because $e^{2r}$ grows linearly with $N_2$. The vacuum noise entering the loss port removes this growth: since $v_x\ge1-\eta$, the exponent $\Th$ in \eqref{eq:Thbounds} is capped at a value linear in $N_2$ for all $\eta<1$. The benefit of squeezing is therefore a finite-budget effect, quantified in Section~\ref{sec:numerical}.
\end{remark}
\end{bl}

\subsection{\blue{End-to-End Error Probability and Effective Throughput}}
\label{subsec:e2e_df_rate}
With the per-hop errors \eqref{eq:firsthop_average_error_final} and \eqref{eq:Pe2_avg_final}, the end-to-end error is given by \eqref{eq:Pe_DF_affine_final}. For fixed $\bar P_{e,1}^{\Hel}\in(0,\frac12)$,
\eqref{eq:Pe_DF_affine_final} is affine and strictly increasing in
$\bar P_{e,2}^{\Hel}(t,\zeta)$. Therefore, any reduction in the
second-hop Helstrom error directly improves end-to-end reliability.

\begin{bl}
Using \eqref{eq:rate_final_full}, the effective throughput is
\begin{equation}
R_2(t,\zeta)
=
\frac{1-t}{2}
\Big(
1-\bar P_{e,1}^{\Hel}
-
(1-2\bar P_{e,1}^{\Hel})\bar P_{e,2}^{\Hel}(t,\zeta)
\Big).
\label{eq:R2_affine_final}
\end{equation}
Since $1-2\bar P_{e,1}^{\Hel}>0$ (because $\vphi(u)<\frac12$ for $u>0$ and $\eta_1>0$ almost surely), for every $t$ the maximization over $\zeta$ is attained at $\zeta^\star(t)$ of Proposition~\ref{prop:allocation}, and [P1] reduces to the scalar problem
\begin{equation}
t^\star\in\arg\max_{0<t<1}R_2\big(t,\zeta^\star(t)\big).
\label{eq:tstar_scalar_final}
\end{equation}

\begin{proposition}[Interior optimal harvesting fraction]
\label{prop:interior}
The function $R_2^\star(t)\triangleq R_2(t,\zeta^\star(t))$ is continuous on $(0,1)$, $R_2^\star(t)\to\frac14$ as $t\to0^+$, $R_2^\star(t)\to0$ as $t\to1^-$, and $R_2^\star(t)>\frac14$ for all sufficiently small $t>0$. Consequently, the maximum in \eqref{eq:tstar_scalar_final} exists and is attained at some $t^\star\in(0,1)$.
\end{proposition}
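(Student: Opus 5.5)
The plan is to work with the envelope $G(N_2)\triangleq\min_{\zeta\in[0,1]}\bar P_{e,2}^{\Hel}(N_2,\zeta)$, viewed as a function of the photon budget. Then $R_2^\star(t)=\frac{1-t}{2}\big(1-\bar P_{e,1}^{\Hel}-(1-2\bar P_{e,1}^{\Hel})G(N_2(t))\big)$ by \eqref{eq:R2_affine_final}. Since $N_2(t)=\Ct\,t/(1-t)$ is continuous and strictly increasing from $0$ to $\infty$ on $(0,1)$, every claim about $R_2^\star$ reduces to a claim about $G$ on $[0,\infty)$.

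\textbf{Continuity.} The conditional error \eqref{eq:Pe2_cond_final} is a trace-norm functional of Gaussian states whose first and second moments (Lemma~\ref{lem:output}) depend continuously on $(N_2,\zeta,\eta)$ through \eqref{eq:gamma2_final}--\eqref{eq:r_final}. Gaussian states are continuous in trace norm as functions of their moments, so the conditional error is jointly continuous. It is also bounded by $\frac12$, so dominated convergence gives joint continuity of $\bar P_{e,2}^{\Hel}(N_2,\zeta)$ on $[0,\infty)\times[0,1]$. This is the same argument as in Proposition~\ref{prop:allocation}, extended to joint variables. Because $[0,1]$ is compact, the minimum over $\zeta$ of a jointly continuous function is continuous in $N_2$ (Berge's maximum theorem). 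Hence $G$ is continuous, and therefore so is $R_2^\star$.

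\textbf{Limits at the endpoints.} As $t\to0^+$ we have $N_2\to0$. Then $\gamma_2\to0$ and $r\to0$ uniformly in $\zeta$, so both output states tend to the vacuum and the error tends to $\frac12$. A cleaner route is the bound $G(N_2)\ge\vphi(\Theta_{\mathrm{in}})$ from \eqref{eq:dataproc}, together with $\Theta_{\mathrm{in}}\le 4N_2(N_2+1)\to0$ (the lossless maximum of $\Theta_{\mathrm{in}}$). This gives $G\to\frac12$, so $R_2^\star\to\frac12\big(1-\bar P_{e,1}^{\Hel}-\frac12(1-2\bar P_{e,1}^{\Hel})\big)=\frac14$. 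As $t\to1^-$, the factor $(1-t)/2\to0$ and the bracket stays bounded, so $R_2^\star\to0$.

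\textbf{Exceeding $\frac14$ for small $t$ (the main obstacle).} Write $c=1-2\bar P_{e,1}^{\Hel}>0$. A short computation gives
\begin{equation*}
R_2^\star(t)-\tfrac14=\tfrac{c}{2}(1-t)\Big(\tfrac12-G\big(N_2(t)\big)\Big)-\tfrac{t}{4}.
\end{equation*}
Since $G\le\bar P_{e,2}^{\Hel}(N_2,0)$, it suffices to use coherent relaying. For coherent relaying, $\frac12-\mathbb E[\vphi(4N_2\eta_2)]$ behaves linearly in $N_2$ as $N_2\to0$. The catch is that $\vphi'(0)=-\infty$: from $\frac12-\vphi(u)=\frac12\sqrt{1-e^{-u}}\approx\frac12\sqrt u$, the gain is in fact of order $\sqrt{N_2}$. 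Fatou's lemma (or monotone convergence) gives $\liminf_{N_2\to0}\big(\frac12-\mathbb E[\vphi(4N_2\eta_2)]\big)/\sqrt{N_2}\ge\mathbb E[\sqrt{\eta_2}]>0$. Since $N_2(t)\sim\Ct t$, the gain term is of order $\sqrt{t}$, which dominates the linear loss $t/4$. Therefore $R_2^\star(t)>\frac14$ for all sufficiently small $t$. The care needed here is in getting the square-root scaling right and in confirming that $c>0$, which holds because $\eta_1>0$ almost surely.

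\textbf{Existence of an interior maximizer.} Choose $t_0$ with $R_2^\star(t_0)>\frac14$. By the two endpoint limits, there exist $\delta>0$ such that $R_2^\star<R_2^\star(t_0)$ on $(0,\delta)\cup(1-\delta,1)$. On the compact interval $[\delta,1-\delta]$, the continuous function $R_2^\star$ attains its maximum. That maximum is at least $R_2^\star(t_0)$, so it is the global supremum and is attained at some $t^\star\in(0,1)$.
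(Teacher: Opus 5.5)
Your proof is correct and follows the paper's argument essentially step for step: the affine rewriting of $R_2^\star$, joint continuity plus Berge's theorem, the data-processing bound \eqref{eq:dataproc} for $t\to0^+$, the $\sqrt{N_2}$-order gain of coherent relaying beating the linear time loss $t/4$, and a final compactness argument. The differences are minor: you justify continuity via trace-norm continuity of Gaussian states in their moments, where the paper uses strong continuity of $D(\gamma)S(r)|0\rangle$ together with the channel's trace-norm contraction. You also use the sharper bound $\Theta_{\mathrm{in}}\le 4N_2(N_2+1)$ instead of $4N_2(\sqrt{N_2}+\sqrt{1+N_2})^2$, and you obtain the $\sqrt{N_2}$ gain by a Fatou liminf, where the paper uses the explicit inequality $1-e^{-u}\ge u/2$ on $[0,1]$ to get an explicit threshold.
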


\textit{Proof:} See Appendix~\ref{app:proof_interior}.

Proposition~\ref{prop:interior} guarantees an interior operating point. We do not claim uniqueness, since the concavity argument available for a smooth closed-form allocation does not carry over to a numerical minimizer of a trace-norm functional; the throughput is unimodal in all our numerical experiments (Section~\ref{sec:numerical}).

\begin{proposition}[CP throughput bound]
\label{prop:bound}
Let $R_2^{\mathrm{cp}}(t)$ denote the effective throughput \eqref{eq:R2_affine_final} with $\bar P_{e,2}^{\Hel}$ replaced by $\min_\zeta\bar P_{e,2}^{\mathrm{cp}}(t,\zeta)$, and let $R_2^\star(t)$ be as in Proposition~\ref{prop:interior}. Then:
\begin{enumerate}[label=(\roman*),leftmargin=2em]
\item for every $t\in(0,1)$, $R_2^{\mathrm{cp}}(t)\ge R_2^{\star}(t)$, hence $\max_tR_2^{\mathrm{cp}}(t)\ge\max_tR_2^{\star}(t)$;
\item as $N_2\to\infty$ with the other parameters fixed (e.g., through $\Ct\to\infty$),
\begin{equation}
R_2^{\mathrm{cp}}(t)-R_2^{\star}(t)\sim\frac{1-t}{2}\big(1-2\bar P_{e,1}^{\Hel}\big)\Kb\,(4hN_2)^{-g_2}\longrightarrow0 .
\end{equation}
\end{enumerate}
\end{proposition}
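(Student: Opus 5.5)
Both parts follow from the affine form \eqref{eq:R2_affine_final}. For fixed $t$, the map $P\mapsto\frac{1-t}{2}\big(1-\bar P_{e,1}^{\Hel}-(1-2\bar P_{e,1}^{\Hel})P\big)$ is strictly decreasing, since $1-2\bar P_{e,1}^{\Hel}>0$. Any comparison of second-hop errors therefore becomes a comparison of throughputs.

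\emph{Part (i).} By \eqref{eq:cp_lower}, $P_{e,2}^{\mathrm{cp}}(\eta;t,\zeta)\le P_{e,2}^{\Hel}(\eta;t,\zeta)$ pointwise in $\eta$ and $\zeta$. Taking expectations gives $\bar P_{e,2}^{\mathrm{cp}}(t,\zeta)\le\bar P_{e,2}^{\Hel}(t,\zeta)$. Hence
\[
\min_\zeta\bar P_{e,2}^{\mathrm{cp}}(t,\zeta)\le\bar P_{e,2}^{\mathrm{cp}}(t,\zeta^\star(t))\le\bar P_{e,2}^{\star}(t).
\]
The minimum over $\zeta$ exists by the same continuity argument as Proposition~\ref{prop:allocation}, since $\vphi$ is continuous and bounded and $\Theta_{\mathrm{in}}$ is continuous in $\zeta$. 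Monotonicity then gives $R_2^{\mathrm{cp}}(t)\ge R_2^\star(t)$ for every $t$, and taking suprema over $t$ gives the second claim.

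\emph{Part (ii).} The gap is exactly
\[
R_2^{\mathrm{cp}}(t)-R_2^\star(t)=\tfrac{1-t}{2}\big(1-2\bar P_{e,1}^{\Hel}\big)\Big(\bar P_{e,2}^\star-\min_\zeta\bar P_{e,2}^{\mathrm{cp}}\Big).
\]
Here $\bar P_{e,1}^{\Hel}$ and $t$ are fixed while $\Ct\to\infty$. Theorem~\ref{thm:asymptotics}(a) gives $\bar P_{e,2}^\star\sim\Kb(4hN_2)^{-g_2}$. It remains to show that $\min_\zeta\bar P_{e,2}^{\mathrm{cp}}=o(N_2^{-g_2})$, so the subtracted term is negligible.

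For the upper estimate I evaluate at $\zeta_{\mathrm L}$: $\min_\zeta\bar P_{e,2}^{\mathrm{cp}}\le\bar P_{e,2}^{\mathrm{cp}}(t,\zeta_{\mathrm L})$. By Corollary~\ref{cor:benchmark}, this is $\sim\Kb(4hN_2^2)^{-g_2}=O(N_2^{-2g_2})$, which is $o(N_2^{-g_2})$ since $g_2>0$. Nonnegativity of $\bar P_{e,2}^{\mathrm{cp}}$ supplies the lower side. Therefore the difference $\bar P_{e,2}^\star-\min_\zeta\bar P_{e,2}^{\mathrm{cp}}\sim\Kb(4hN_2)^{-g_2}$, which yields the stated equivalence. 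The limit is $0$ because $N_2\to\infty$ with $g_2>0$.

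The only delicate point is applying Theorem~\ref{thm:asymptotics} and Corollary~\ref{cor:benchmark} along $\Ct\to\infty$ at fixed $t$, rather than along some other way of sending $N_2\to\infty$. I would check this by noting that both results depend on $(t,\Ct)$ only through $N_2(t)=\Ct\,t/(1-t)$, which grows without bound at fixed $t$. The prefactors $(1-t)/2$ and $(1-2\bar P_{e,1}^{\Hel})$ do not depend on $\Ct$: $\bar P_{e,1}^{\Hel}$ depends only on $a$ and the first-hop law. I expect no other obstacle.
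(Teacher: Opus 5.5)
Your proof is correct and follows the paper's argument: for part (i), you average the pointwise inequality \eqref{eq:cp_lower} and minimize over $\zeta$; for part (ii), you combine the exact gap formula with Theorem~\ref{thm:asymptotics}(a). The only difference is that the paper identifies $\min_\zeta\bar P_{e,2}^{\mathrm{cp}}$ exactly as the value at $\zeta_{\mathrm L}$, since $\vphi$ is decreasing and $\Theta_{\mathrm{in}}$ is maximized there, whereas you sandwich it between $0$ and the value at $\zeta_{\mathrm L}$ via Corollary~\ref{cor:benchmark}, which is already enough for the $o(N_2^{-g_2})$ estimate.
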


\textit{Proof:} See Appendix~\ref{app:proof_bound}.

Hence the CP benchmark, which admits the exact series representation \eqref{eq:series_general} and the closed-form exponent $4N_2(N_2+1)$ at $\zeta_{\mathrm L}$, is a rigorous analytical upper bound on the achievable throughput, whatever the squeezing fraction. Its absolute gap vanishes at the same polynomial rate as the coherent second-hop error itself, so that $R_2^{\mathrm{cp}}/R_2^{\star}\to1$; in relative terms of the second-hop error, however, the benchmark is not tight, since $\bar P_{e,2}^{\mathrm{cp}}=O(N_2^{-2g_2})$ whereas $\bar P_{e,2}^{\star}=\Theta(N_2^{-g_2})$.
\end{bl}

\section{\blue{Numerical Results and Design Insights}}
\label{sec:numerical}

\begin{bl}
In this section, we illustrate the analytical results of Section~\ref{sec:analysis} and the main design insights. We examine: i) the accuracy of the closed-form fidelity and bounds, ii) the loss-aware squeezing allocation compared with the lossless rule, iii) the regime in which squeezing is beneficial, iv) the asymptotic behavior of Theorem~\ref{thm:asymptotics}, and v) the throughput-optimal harvesting design and the quality of the CP bound. The simulation parameters are given in Table~\ref{tab:params}. Coherent and CP errors follow from the series of Proposition~\ref{prop:series_laplace}. For DSS, the output states of Lemma~\ref{lem:output} are built in a truncated Fock basis of dimension $M=\lfloor\bar n+9\sqrt{\bar n+1}+45\rfloor$, with $\bar n=\gamma_2^2+\sinh^2 r$, and the conditional Helstrom error is $\tfrac12\big(1-\tfrac12\|\rho_0-\rho_1\|_1\big)$, where the trace norm is the sum of the absolute eigenvalues of $\rho_0-\rho_1$. Increasing $M$ by $60$ changes the conditional error by less than $10^{-7}$ in relative terms. The optimizers $\zeta^\star(t)$ and $t^\star$ use bounded scalar minimization with tolerances $10^{-7}$ and $10^{-5}$, respectively. Monte Carlo results use $10^5$ realizations of the truncated law \eqref{eq:eta_def}.

 Fig.~\ref{fig:prop2} plots the exact conditional Helstrom error $P_{e,2}^{\Hel}(\eta)$ for coherent ($\zeta=0$) and DSS ($\zeta=0.3$) signaling, together with the bounds \eqref{eq:FvG} and the CP error. Three observations follow. First, for $\zeta=0$ the states are pure (Lemma~\ref{lem:output}), so the lower bound coincides with the exact error, which is the equality case stated in Proposition~\ref{prop:fidelity}. Second, for $\zeta=0.3$ the states are mixed for every $\eta\in(0,1)$ and the exact error lies between the two bounds, but the lower bound is loose: the fidelity ignores the noise in the anti-squeezed quadrature, which is why the bounds are used for the asymptotic analysis and not as error estimates. Third, the DSS error is smaller than the coherent error only for sufficiently large $\eta$ (here $\eta\gtrsim0.6$); under strong loss, the vacuum noise $1-\eta$ in $v_x$ cancels the squeezing gain (Remark~\ref{rem:why_mixed}) and the DSS error exceeds the coherent one. The CP error lies below the exact error, in agreement with \eqref{eq:cp_lower}: the two coincide as $\eta\to1$ and both tend to $1/2$ as $\eta\to0$, while at intermediate $\eta$ the CP error is orders of magnitude smaller. It is therefore a benchmark and not an achievable performance. We also verified numerically that the closed-form fidelity \eqref{eq:Fclosed} coincides with $\|\sqrt{\rho_0}\sqrt{\rho_1}\|_1^2$ over a wide grid of photon budgets, squeezing fractions and transmissivities.

\begin{figure}[t]
\centering
\includegraphics[width=\columnwidth]{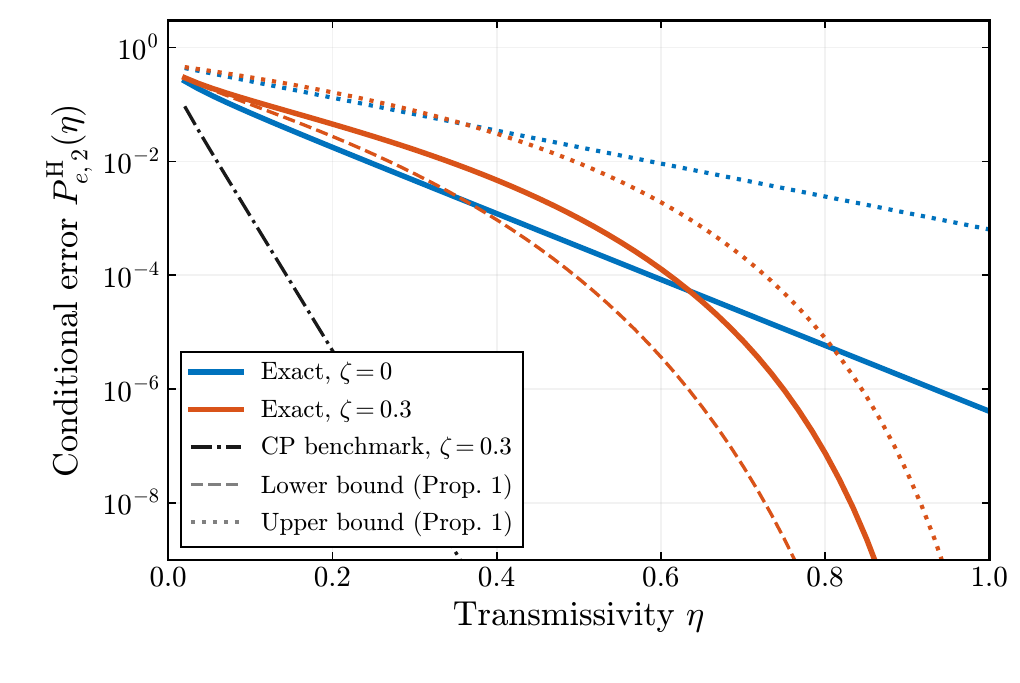}
\caption{Validation of Proposition~\ref{prop:fidelity}: conditional Helstrom error $P_{e,2}^{\Hel}(\eta)$ versus the transmissivity $\eta$ at $N_2=3.33$, with the bounds \eqref{eq:FvG} and the CP benchmark.}
\label{fig:prop2}
\end{figure}

Fig.~\ref{fig:zeta}(a) plots $\bar P_{e,2}^{\Hel}(t,\zeta)$ versus $\zeta$ for three harvesting fractions; the Monte Carlo points confirm the exact evaluation. As guaranteed by Proposition~\ref{prop:allocation}, a minimizer exists, and it is attained at a small squeezing fraction $\zeta^\star(t)$, far below the lossless rule $\zeta_{\mathrm L}(t)$ of \eqref{eq:zeta_lossless}. Beyond $\zeta^\star$ the error increases with $\zeta$ and, at $\zeta_{\mathrm L}$, it exceeds that of coherent relaying: moving photons from displacement to squeezing reduces $\gamma_2$, while the gain from squeezing is capped by $v_x\ge1-\eta$ (Remark~\ref{rem:why_mixed}). The CP benchmark at $\zeta_{\mathrm L}$ lies below the exact error, by a factor that grows with $N_2$, which shows how much of the nominal squeezing gain the vacuum noise removes. Fig.~\ref{fig:zeta}(b) zooms on small $\zeta$: squeezing improves on coherent relaying only in a narrow interval around $\zeta^\star(t)$.

\begin{figure}[t]
\centering
\includegraphics[width=\columnwidth]{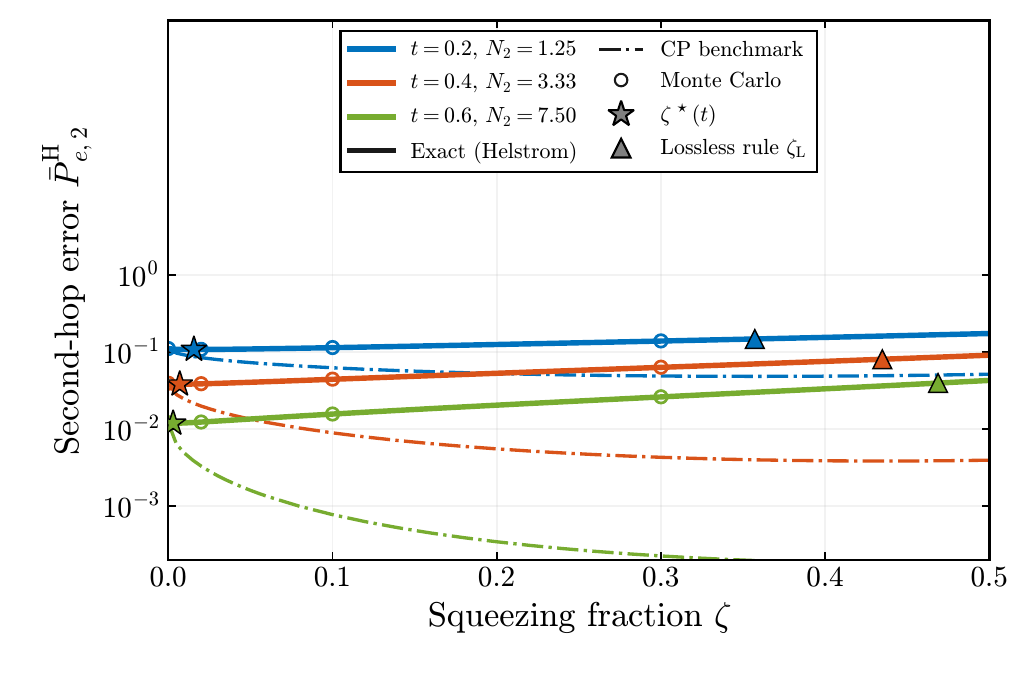}\\[-0.5mm]{\small (a)}\\[1mm]
\includegraphics[width=\columnwidth]{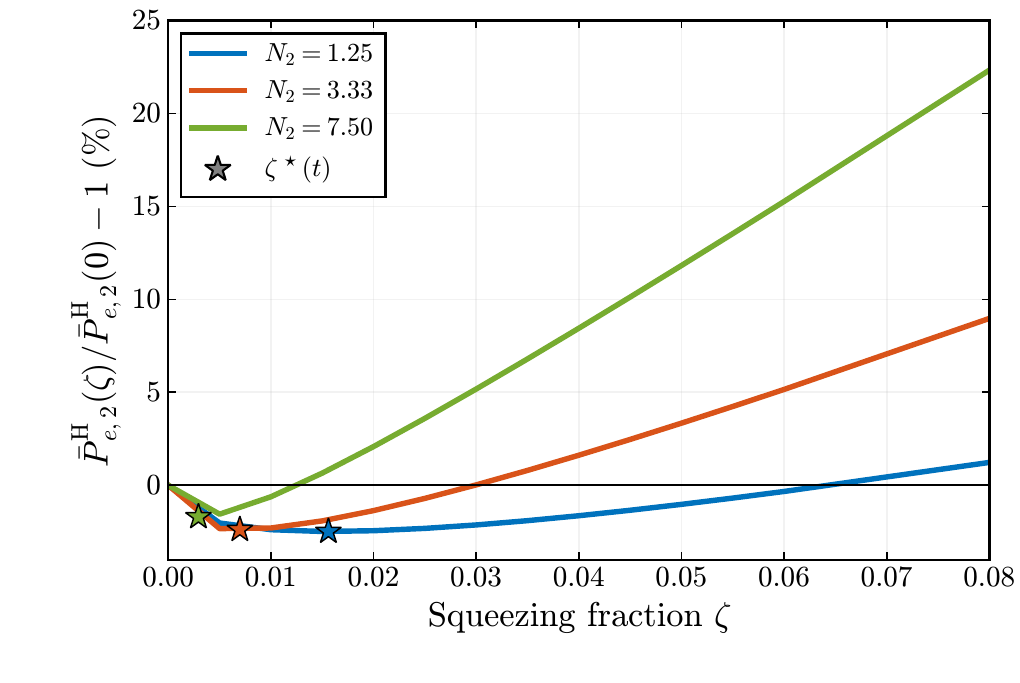}\\[-0.5mm]{\small (b)}
\caption{Loss-aware allocation. (a) Second-hop error $\bar P_{e,2}^{\Hel}(t,\zeta)$ versus $\zeta$: exact Helstrom error (solid), CP benchmark (dash-dotted), Monte Carlo (circles), loss-aware optimum $\zeta^\star(t)$ (stars) and lossless rule $\zeta_{\mathrm L}$ (triangles). (b) Relative change with respect to coherent relaying ($\zeta=0$).}
\label{fig:zeta}
\end{figure}

Fig.~\ref{fig:regime}(a) shows that $\zeta^\star$ stays small and decays toward zero as $N_2$ grows, in agreement with Theorem~\ref{thm:asymptotics}(b), whereas the lossless rule \eqref{eq:zeta_lossless} tends to $1/2$. Squeezing is therefore useful only at moderate budgets and mild loss. Fig.~\ref{fig:regime}(b) plots the relative error reduction $1-\bar P_{e,2}^{\star}/\bar P_{e,2}^{\mathrm{coh}}$ for $h_{\ell,2}=0.6$ and three turbulence strengths. The reduction is modest for the turbulence of Table~\ref{tab:params}, but reaches 10--30\% in weak turbulence. Strong turbulence concentrates the averaged error on deep fades, where $v_x\approx1$ and squeezing is ineffective (Remark~\ref{rem:why_mixed}); weak turbulence concentrates $\eta_2$ around $h_{\ell,2}$, where part of the squeezing gain survives. In all cases, the reduction peaks at moderate $N_2$ and vanishes as $N_2$ grows, in agreement with Theorem~\ref{thm:asymptotics}.

\begin{figure}[t]
\centering
\includegraphics[width=\columnwidth]{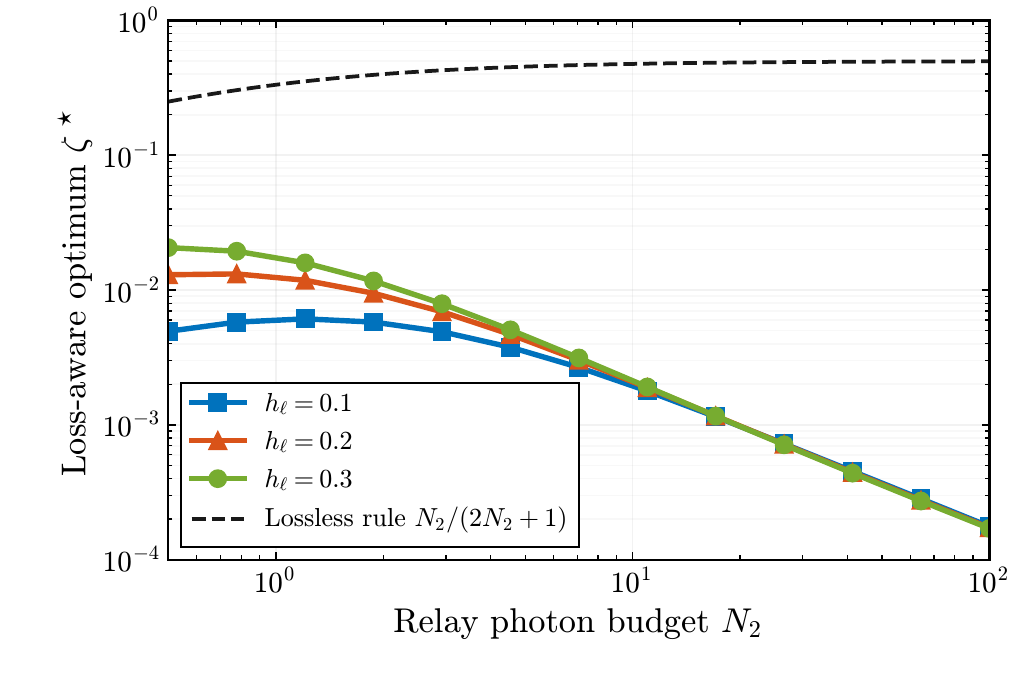}\\[-0.5mm]{\small (a)}\\[1mm]
\includegraphics[width=\columnwidth]{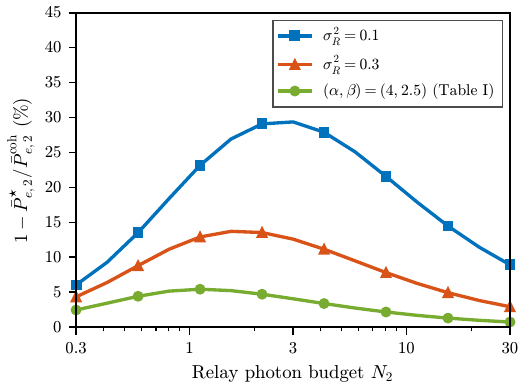}\\[-0.5mm]{\small (b)}
\caption{(a) Loss-aware optimal squeezing fraction $\zeta^\star$ versus $N_2$ for three link factors, compared with the lossless rule. (b) Relative reduction of the second-hop error achieved by optimized squeezing, $1-\bar P_{e,2}^{\star}/\bar P_{e,2}^{\mathrm{coh}}$, versus $N_2$ for $h_{\ell,2}=0.6$ and three turbulence strengths; for $\sigma_R^2\in\{0.1,0.3\}$, $(\alpha_2,\beta_2)$ follow from \eqref{eq:alpha_beta}.}
\label{fig:regime}
\end{figure}

Fig.~\ref{fig:asym} plots the second-hop error normalized by the coherent asymptotic law $\bar K(4hN_2)^{-g_2}$. Coherent and optimized DSS relaying both tend to $1$, as predicted by Theorem~\ref{thm:asymptotics}(a): optimized squeezing changes neither the diversity order $g_2$ nor the leading constant $\bar K$.  In contrast, the CP benchmark follows $(N_2+1)^{-g_2}$ and vanishes relative to the reference, as predicted by Corollary~\ref{cor:benchmark}: if squeezing were preserved, the diversity order would double.

\begin{figure}[t]
\centering
\includegraphics[width=\columnwidth]{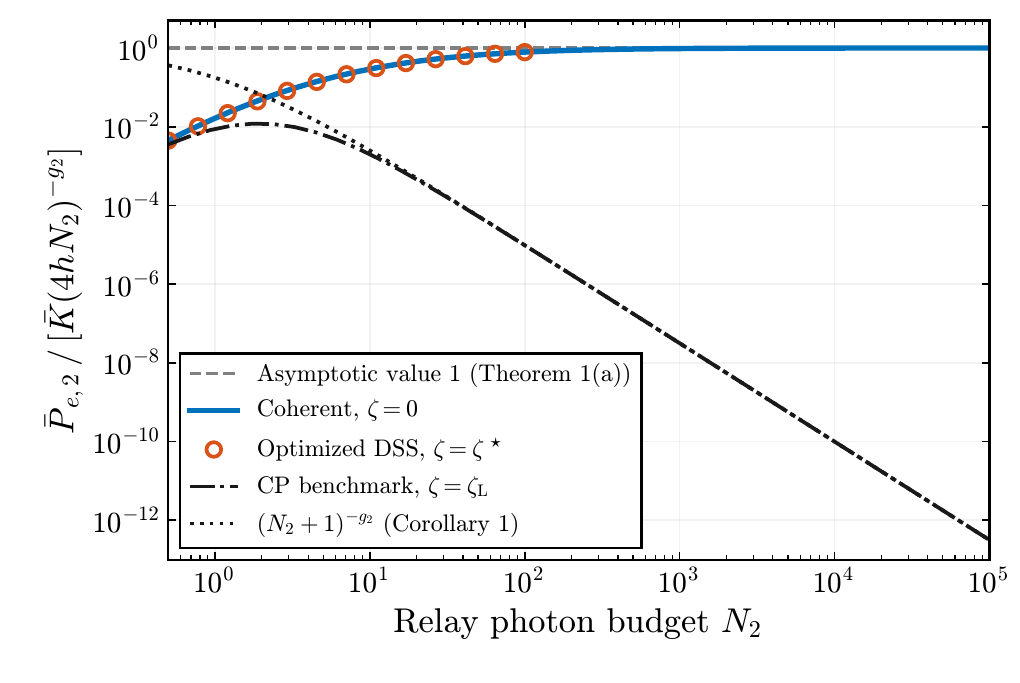}
\caption{Second-hop error normalized by the coherent asymptotic law $\bar K(4hN_2)^{-g_2}$ ($h=0.3$). Coherent and optimized DSS relaying tend to $1$ (Theorem~\ref{thm:asymptotics}(a)), whereas the CP benchmark follows $(N_2+1)^{-g_2}$ (Corollary~\ref{cor:benchmark}). DSS points are exact up to $N_2=100$; the coherent and CP curves are obtained from the series of Proposition~\ref{prop:series_laplace} up to $N_2=10^5$.}
\label{fig:asym}
\end{figure}

Fig.~\ref{fig:throughput} shows that the throughput $R_2(t)$ is unimodal with an interior maximum at a small harvesting fraction, as guaranteed by Proposition~\ref{prop:interior}: too little harvesting starves the relay, while too much leaves too little time for forwarding. The DSS curve is almost indistinguishable from the coherent one, so the throughput gain of squeezing is marginal. Fig.~\ref{fig:gainmap} shows that this gain is positive only in a thin region of small $\zeta$ and small $t$, and that the ridge of maximal gain follows $\zeta^\star(t)$. The region shrinks as $t$ grows, since $N_2(t)$ increases and $\zeta^\star\to0$ (Theorem~\ref{thm:asymptotics}(b)), and beyond it over-squeezing reduces the throughput below that of coherent relaying.

Fig.~\ref{fig:bound} plots the relative gap $R_2^{\mathrm{cp}}/R_2^{\star}-1$ versus $N_2$. The gap is nonnegative, as stated by Proposition~\ref{prop:bound}(i), and it is moderate at small budgets, where it is larger for smaller $h_\ell$ (heavier loss). It then decays with $N_2$ for all link factors, in agreement with Proposition~\ref{prop:bound}(ii): the absolute gap vanishes at the rate of the coherent second-hop error, even though the second-hop errors themselves differ by orders of magnitude (Fig.~\ref{fig:asym}). The CP benchmark is therefore a rigorous throughput upper bound that becomes tight at large budgets.

\begin{figure}[t]
\centering
\includegraphics[width=\columnwidth]{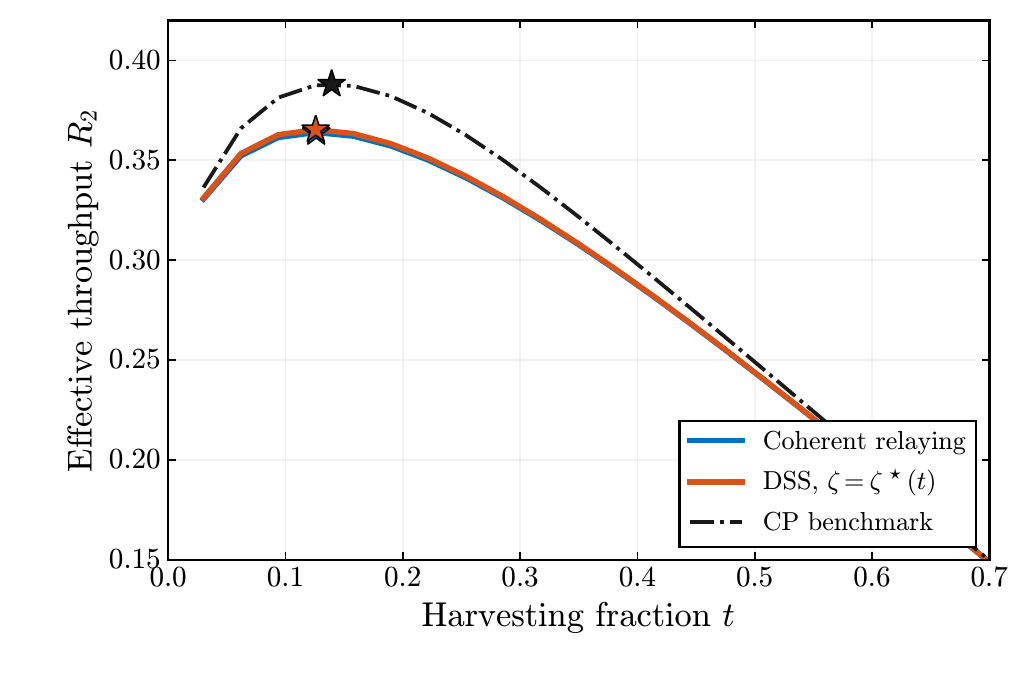}
\caption{Effective throughput $R_2(t)$ for coherent relaying, DSS relaying at $\zeta^\star(t)$ and the CP benchmark; stars mark the maximizers.}
\label{fig:throughput}
\end{figure}

\begin{figure}[t]
\centering
\includegraphics[width=\columnwidth]{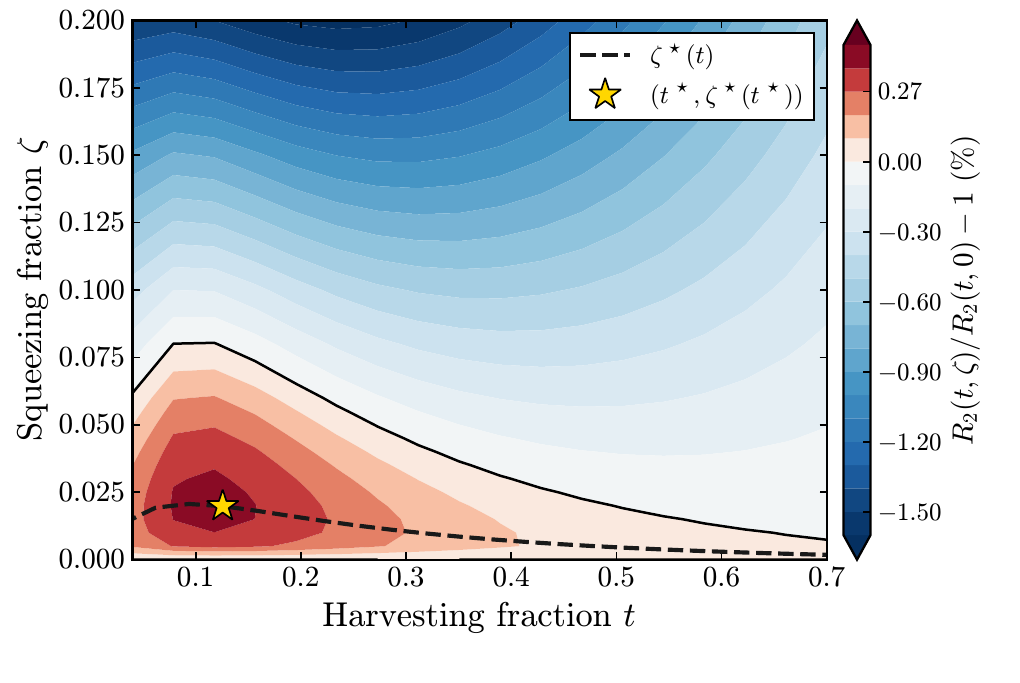}
\caption{Relative throughput gain $R_2(t,\zeta)/R_2(t,0)-1$ over $(t,\zeta)$, with the zero-gain contour (solid), the numerical ridge $\zeta^\star(t)$ (dashed) and the throughput-optimal point (star).}
\label{fig:gainmap}
\end{figure}

\begin{figure}[t]
\centering
\includegraphics[width=\columnwidth]{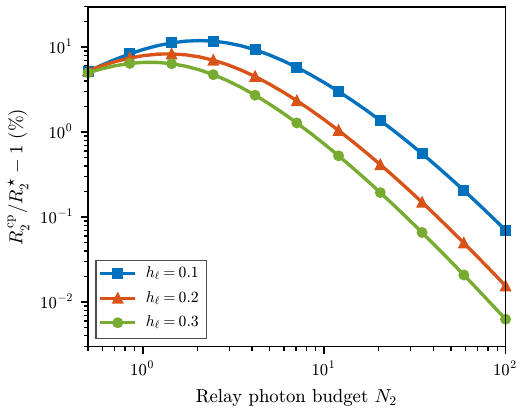}
\caption{Quality of the CP bound: relative gap $R_2^{\mathrm{cp}}/R_2^{\star}-1$ versus $N_2$ for three link factors.}
\label{fig:bound}
\end{figure}

\end{bl}

\begin{table}[!htb]
\centering
\caption{Simulation parameters.}
\label{tab:params}
\renewcommand{\arraystretch}{1.15}
\footnotesize
\begin{tabular}{@{}p{0.27\columnwidth}p{0.67\columnwidth}@{}}
\toprule
Turbulence & Gamma-Gamma, $(\alpha_i,\beta_i)=(4,2.5)$; $g_2=2.5$\\
Fig.~\ref{fig:regime}(b) & $\sigma_R^2\in\{0.1,0.3\}$, $h_{\ell,2}=0.6$\\
Link factors & $h_{\ell,1}=h_{\ell,2}=0.3$ ($F_1=F_2=0.9758$); $h_{\ell,2}\in\{0.1,0.2\}$ in Figs.~\ref{fig:regime}(a) and~\ref{fig:bound}\\
Source & BPSK coherent states, $a=16$ ($\bar P_{e,1}^{\Hel}=3.15\times10^{-3}$)\\
Relay budget & $N_2(t)=\Ct\,t/(1-t)$, $\Ct=5$\\
Throughput & $R_2(t,\zeta)=\tfrac{1-t}{2}\big(1-\bar P_e^{\DF}\big)$\\
Second-hop error & exact Helstrom error of the states of Lemma~\ref{lem:output}, averaged over \eqref{eq:eta_def}\\
\bottomrule
\end{tabular}
\end{table}
\section{Conclusion}

\begin{bl}This paper developed an analytical framework for TS-DF quantum relay-assisted FSO communication with optical EH and DSS signaling.
The proposed architecture integrates optical EH and quantum-limited detection within a unified model, where a batteryless relay harvests optical energy during a fraction $t$ of each transmission block and forwards quantum states in its own half-duplex slot of the remaining interval.
\end{bl}

\begin{bl}
Each hop was modeled as a pure-loss bosonic channel with truncated Gamma-Gamma-distributed transmissivity, so that coherent states remain pure whereas DSS states are mapped to mixed Gaussian states. For the DSS hop, we derived the Gaussian output states, a closed-form fidelity, and rigorous two-sided bounds on the Helstrom error. For coherent signaling, we obtained exact Laplace-series expressions for the average error. We proved that a loss-aware optimal squeezing allocation exists and that it differs from the known lossless rule, which under turbulence-induced loss can be worse than coherent relaying. Optimized squeezing was shown to preserve the diversity order and the leading constant of coherent relaying, so that its benefit is a finite-photon-budget effect. An interior throughput-optimal harvesting fraction exists, and the CP benchmark provides an analytical throughput upper bound whose absolute gap vanishes as the photon budget grows.

 Future research directions include excess Gaussian receiver noise or thermal background radiation, practical non-Gaussian receiver architectures, correlated fading across the two hops, and the extension of the analysis to the link-distance dependence of the end-to-end reliability.
\end{bl}

\appendices

\section{\blue{Proof of Lemma~\ref{lem:output}}}
\label{app:proof_output}

\begin{bl}
We use $\hat x=\hat a+\hat a^\dagger$, $\hat p=-j(\hat a-\hat a^\dagger)$, and the convention in which a Gaussian state has mean $\bm m=(\langle\hat x\rangle,\langle\hat p\rangle)$ and covariance $\Sigma_{kl}=\tfrac12\langle\{\Delta\hat q_k,\Delta\hat q_l\}\rangle$, so that the vacuum has $\Sigma=\mathbf I_2$ \cite{Weedbrook2012,Serafini2017}.

We first characterize the input states. For $S(r)=\exp[\frac r2(\hat a^2-\hat a^{\dagger2})]$ one has $S^\dagger\hat aS=\hat a\cosh r-\hat a^\dagger\sinh r$, hence $S^\dagger\hat xS=e^{-r}\hat x$ and $S^\dagger\hat pS=e^{r}\hat p$. For $D(\gamma)$ with $\gamma$ real, $D^\dagger\hat aD=\hat a+\gamma$, hence $D^\dagger\hat xD=\hat x+2\gamma$ and $D^\dagger\hat pD=\hat p$. The vacuum has $\langle\hat x\rangle=\langle\hat p\rangle=0$, $\langle\hat x^2\rangle=\langle\hat p^2\rangle=1$ and $\langle\{\hat x,\hat p\}\rangle=0$. Therefore, for $|\psi_b\rangle=D((-1)^b\gamma_2)S(r)|0\rangle$,
\[
\bm m_{\mathrm{in},b}=\big((-1)^b2\gamma_2,\,0\big)^{\!\top},\qquad
\Sigma_{\mathrm{in}}=\mathrm{diag}\big(e^{-2r},\,e^{2r}\big).
\]

Next, by the beamsplitter relation \eqref{eq:BS}, $\hat x_{\mathrm{out}}=\sqrt\eta\,\hat x_{\mathrm{in}}+\sqrt{1-\eta}\,\hat x_e$ and likewise for $\hat p$. The environment is in the vacuum and is uncorrelated with the input, and a beamsplitter maps Gaussian states to Gaussian states. Hence the output is Gaussian with $\bm m_{\mathrm{out}}=\sqrt\eta\,\bm m_{\mathrm{in}}$ and $\Sigma_{\mathrm{out}}=\eta\Sigma_{\mathrm{in}}+(1-\eta)\mathbf I_2$, which is \eqref{eq:Gauss_map}. Substituting the input moments above gives $\bm m_b=((-1)^b2\sqrt\eta\gamma_2,0)^\top$ and, componentwise,
\[
v_x=\eta e^{-2r}+1-\eta,\qquad v_p=\eta e^{2r}+1-\eta,
\]
which is \eqref{eq:Sigma_out}. The off-diagonal entry is zero because $\Sigma_{\mathrm{in}}$ is diagonal.

For the purity, we expand the determinant of $\Sigma=\mathrm{diag}(v_x,v_p)$ from \eqref{eq:Sigma_out}:
\begin{align*}
v_xv_p&=(1-\eta)^2+\eta^2+\eta(1-\eta)\big(e^{2r}+e^{-2r}\big)\\
&=1-2\eta(1-\eta)+2\eta(1-\eta)\cosh2r\\
&=1+2\eta(1-\eta)(\cosh2r-1)\\
&=1+4\eta(1-\eta)\sinh^2r,
\end{align*}
where we used $(1-\eta)^2+\eta^2=1-2\eta(1-\eta)$ and $\cosh2r-1=2\sinh^2r$. For a single-mode Gaussian state in this convention, $\Tr\rho^2=(\det\Sigma)^{-1/2}$ \cite{Weedbrook2012,Serafini2017}. Since $\det\Sigma\ge1$ with equality if and only if $\eta(1-\eta)\sinh^2r=0$, the state is pure if and only if $r=0$ or $\eta\in\{0,1\}$. Finally, for $r=0$ one has $\gamma_2^2=N_2$, $\Sigma=\mathbf I_2$ and $\bm m_b=((-1)^b2\sqrt{\eta N_2},0)^\top$, which is the coherent state $|(-1)^b\sqrt{\eta N_2}\rangle$. $\blacksquare$
\end{bl}

\section{\blue{Proof of Proposition~\ref{prop:fidelity}}}
\label{app:proof_fidelity}

\begin{bl}
By Lemma~\ref{lem:output}, both states share $\Sigma=\mathrm{diag}(v_x,v_p)$. Apply the single-mode squeezing $S(s)$ with $e^{4s}=v_x/v_p$ to both states. This is a fixed unitary independent of $b$, so it leaves the fidelity and the trace distance unchanged. By the squeezing relations of Appendix~\ref{app:proof_output}, it maps $\Sigma\mapsto\mathrm{diag}(e^{-2s}v_x,e^{2s}v_p)=\nu\mathbf I_2$ with $\nu=\sqrt{v_xv_p}$, and the means to $((-1)^b\delta,0)$ with $\delta=2\sqrt\eta\gamma_2e^{-s}=2\sqrt\eta\gamma_2(v_p/v_x)^{1/4}$.

Since the transformed pair is isotropic with a common covariance, its fidelity follows from the known Gaussian expression: for two single-mode Gaussian states with the same isotropic covariance $\nu\mathbf I_2$ and means differing by $\Delta\bm m$, the squared fidelity is $F=\exp(-|\Delta\bm m|^2/(4\nu))$ \cite{Scutaru1998,ParaoanuScutaru1998}. Here $|\Delta\bm m|=2\delta$, so
\[
F=\exp\!\Big(-\frac{\delta^2}{\nu}\Big)=\exp\!\Big(-\frac{4\eta\gamma_2^2\sqrt{v_p/v_x}}{\sqrt{v_xv_p}}\Big)=\exp\!\Big(-\frac{4\eta\gamma_2^2}{v_x}\Big),
\]
which proves \eqref{eq:Fclosed}. Two checks fix the constant independently of the references: at $\eta=1$, $r=0$, it gives $e^{-4\gamma_2^2}$, the overlap of $|\pm\gamma_2\rangle$, and at $\eta=1$ and general $r$ it gives $e^{-4\gamma_2^2e^{2r}}$, the overlap of the pure DSS pair. It also agrees with a direct numerical evaluation of $\|\sqrt{\rho_0}\sqrt{\rho_1}\|_1^2$.

To convert the fidelity into error bounds, let $D=\frac12\|\rho_0-\rho_1\|_1$. The Fuchs-van de Graaf inequalities \cite{FuchsVanDeGraaf1999} read $1-\sqrt F\le D\le\sqrt{1-F}$ for the squared fidelity $F$, with equality in the upper bound for pure states. For equiprobable states, $P_{e,2}^{\Hel}=\frac12(1-D)$ \cite{HelstromBook}, hence $\frac12(1-\sqrt{1-F})\le P_{e,2}^{\Hel}\le\frac12\sqrt F$, which is \eqref{eq:FvG} with $F=e^{-\Th}$. For $\eta=1$, $v_x=e^{-2r}$ and $\Th=4\gamma_2^2e^{2r}=\Theta_{\mathrm{in}}$; for $r=0$, $v_x=1$ and $\Th=4\eta\gamma_2^2=4\eta N_2$. The states are pure in both cases (Lemma~\ref{lem:output}), which gives the stated equalities.

The bounds \eqref{eq:Thbounds} follow from the expression of $v_x$ in \eqref{eq:Sigma_out}. Since $e^{-2r}\le1$, $v_x\le1$, which gives the lower bound in \eqref{eq:Thbounds}. Since $v_x\ge1-\eta$, $\Th\le4\eta\gamma_2^2/(1-\eta)$. Finally, $v_x-e^{-2r}=(1-\eta)(1-e^{-2r})\ge0$ gives $v_x\ge e^{-2r}$, hence $\Th\le4\eta\gamma_2^2e^{2r}=\eta\Theta_{\mathrm{in}}$.

For the input-level bound \eqref{eq:dataproc}, note that $\Lambda_\eta$ is completely positive and trace preserving, hence a contraction in trace norm \cite[Thm.~9.2]{NielsenChuang2010}. Therefore $\frac12\|\rho_1-\rho_0\|_1\le\frac12\||\psi_1\rangle\langle\psi_1|-|\psi_0\rangle\langle\psi_0|\|_1=\sqrt{1-|\langle\psi_0|\psi_1\rangle|^2}=\sqrt{1-e^{-\Theta_{\mathrm{in}}}}$, where the overlap of the pure DSS pair is the $\eta=1$ case of the fidelity expression above. This is \eqref{eq:dataproc}. $\blacksquare$
\end{bl}

\section{\blue{Proof of Proposition~\ref{prop:allocation}}}
\label{app:proof_allocation}

\begin{bl}
Fix $t\in(0,1)$, so that $N_2=N_2(t)$ is fixed.

We first establish the continuity of the input states in $\zeta$. The maps $\zeta\mapsto\gamma_2(\zeta)=\sqrt{(1-\zeta)N_2}$ and $\zeta\mapsto r(\zeta)=\operatorname{arsinh}\sqrt{\zeta N_2}$ are continuous on $[0,1]$. The families $\{D(\gamma)\}_{\gamma\in\mathbb R}$ and $\{S(r)\}_{r\in\mathbb R}$ are strongly continuous unitary one-parameter groups (Stone's theorem \cite{Holevo2012}). The set $\{S(r)|0\rangle:r\in[0,r_{\max}]\}$ is compact, and strong convergence is uniform on compact sets, so that
\begin{align*}
&\big\|D(\gamma')S(r')|0\rangle-D(\gamma)S(r)|0\rangle\big\|\\
&\quad\le\big\|(D(\gamma')-D(\gamma))S(r')|0\rangle\big\|+\big\|S(r')|0\rangle-S(r)|0\rangle\big\|
\end{align*}
tends to zero as $(\gamma',r')\to(\gamma,r)$. Hence $\zeta\mapsto|\psi_b(\zeta)\rangle$ is norm-continuous.

This implies the continuity of the error for fixed $\eta$: for unit vectors, $\||\psi\rangle\langle\psi|-|\psi'\rangle\langle\psi'|\|_1\le2\|\psi-\psi'\|$. The channel $\Lambda_\eta$ is a trace-norm contraction and does not depend on $\zeta$, so $\zeta\mapsto\rho_b(\eta,\zeta)$ is continuous in trace norm. Since $P_{e,2}^{\Hel}=\frac12(1-\frac12\|\rho_1-\rho_0\|_1)$ and the trace norm is continuous, $\zeta\mapsto P_{e,2}^{\Hel}(\eta;t,\zeta)$ is continuous, with values in $[0,\frac12]$.

It remains to average over the fading and to minimize. The law \eqref{eq:eta_def} of $\eta_2$ does not depend on $\zeta$, and $P_{e,2}^{\Hel}\in[0,\frac12]$ is bounded. By dominated convergence, $\zeta\mapsto\bar P_{e,2}^{\Hel}(t,\zeta)=\mathbb E[P_{e,2}^{\Hel}(\eta_2;t,\zeta)]$ is continuous on $[0,1]$. A continuous function on a compact set attains its minimum, and the set of minimizers is closed and non-empty, so it has a smallest element $\zeta^\star(t)$. $\blacksquare$
\end{bl}

\section{\blue{Proof of Proposition~\ref{prop:series_laplace}}}
\label{app:proof_prop_series}

\begin{bl}
We first expand $\vphi$. For $|z|<1$, the binomial series gives $\sqrt{1-z}=1-\sum_{k\ge1}b_kz^k$ with $b_k=(-1)^{k+1}\binom{1/2}{k}$, where $\binom{1/2}{k}=\frac{1}{k!}\prod_{j=0}^{k-1}(\frac12-j)$. For $k\ge1$ the product contains $k-1$ negative factors, so $b_k>0$. Since the partial sums of $\sum_kb_kz^k$ increase with $z\in[0,1)$ and are bounded by $1-\sqrt{1-z}\le1$, monotone convergence at $z\to1^-$ gives $\sum_kb_k=1$. Setting $z=e^{-u}$ yields $\vphi(u)=\frac12\sum_{k\ge1}b_ke^{-ku}$ for $u>0$.

Next, by the definition \eqref{eq:eta_def} of $\eta_i$, the truncated Laplace transform satisfies, for $s\ge0$,
\[
\mathbb E[e^{-s\eta_i}]=\frac1{F_i}\int_0^{1/h_{\ell,i}}e^{-sh_{\ell,i}x}f_{I_i}(x)\,dx=\frac{1}{F_i}\mathcal L_{I_i}(h_{\ell,i}s)-\tau_i(s),
\]
with $\tau_i(s)\triangleq F_i^{-1}\int_{1/h_{\ell,i}}^\infty e^{-sh_{\ell,i}x}f_{I_i}(x)\,dx\ge0$. For $x>1/h_{\ell,i}$ one has $e^{-sh_{\ell,i}x}\le e^{-s}$, so $\tau_i(s)\le F_i^{-1}e^{-s}\Pr(I_i>1/h_{\ell,i})=\frac{1-F_i}{F_i}e^{-s}$, which is \eqref{eq:Laplace_trunc}.

Averaging the series of $\vphi$ termwise, we note that all terms $b_ke^{-k\Theta\eta_i}$ are nonnegative, so by Tonelli's theorem expectation and summation may be interchanged:
\begin{align*}
\mathbb E[\vphi(\Theta\eta_i)]&=\tfrac12\sum_{k\ge1}b_k\,\mathbb E\big[e^{-k\Theta\eta_i}\big]\\
&=\frac{1}{2F_i}\sum_{k\ge1}b_k\mathcal L_{I_i}(kh_{\ell,i}\Theta)-\tau_i^{\vphi}(\Theta),
\end{align*}
with $\tau_i^{\vphi}(\Theta)=\frac12\sum_kb_k\tau_i(k\Theta)\ge0$ (both series converge, since $\mathcal L_{I_i}\le1$ and $\sum_kb_k=1$). By the bound on $\tau_i$ above, $\tau_i^{\vphi}(\Theta)\le\frac{1-F_i}{F_i}\cdot\frac12\sum_kb_ke^{-k\Theta}=\frac{1-F_i}{F_i}\vphi(\Theta)$. This is \eqref{eq:series_general}. Equations \eqref{eq:Pe1_series_final} and \eqref{eq:Pe_coh_final} follow with $\Theta=4a$ and $\Theta=4N_2$, using \eqref{eq:firsthop_average_error_final} and the fact that $\bar P_{e,2}^{\Hel}(t,0)=\mathbb E[\vphi(4N_2\eta_2)]$ (Proposition~\ref{prop:fidelity}, $r=0$). $\blacksquare$
\end{bl}

\section{\blue{Proof of Lemma~\ref{lem:laplace_asympt}}}
\label{app:proof_gamma_gamma_asympt}

\begin{bl}
Let $\nu=|\alpha-\beta|>0$ and $g=\min\{\alpha,\beta\}$, so that $(\alpha+\beta-\nu)/2=g$.

We start from the behavior of the density near the origin. Since $K_{-\nu}=K_\nu$ \cite{DLMF}, the density \eqref{eq:gamma_gamma_pdf_final} involves $K_\nu(z)$ with $z=2\sqrt{\alpha\beta x}\to0^+$, and for $\nu>0$ \cite{DLMF},
\begin{equation}
K_\nu(z)\sim2^{\nu-1}\Gamma(\nu)\,z^{-\nu},\qquad z\to0^+.
\label{eq:Bessel_small_arg}
\end{equation}
Since $z^{-\nu}=2^{-\nu}(\alpha\beta)^{-\nu/2}x^{-\nu/2}$, substituting into the density gives
\begin{align*}
f_I(x)&\sim\frac{2(\alpha\beta)^{(\alpha+\beta)/2}}{\Gamma(\alpha)\Gamma(\beta)}\cdot\frac{\Gamma(\nu)}{2}(\alpha\beta)^{-\nu/2}\,x^{\frac{\alpha+\beta-\nu}{2}-1}\\
&=c_0\,x^{g-1},\qquad x\to0^+,
\end{align*}
with $c_0=(\alpha\beta)^g\Gamma(\nu)/[\Gamma(\alpha)\Gamma(\beta)]$.

Given this behavior, we rewrite the Laplace transform with the substitution $y=sx$:
\[
s^g\mathcal L_I(s)=\int_0^\infty e^{-y}y^{g-1}\,q\!\left(\frac ys\right)dy,\qquad q(x)\triangleq\frac{f_I(x)}{x^{g-1}} .
\]
By the previous step, $q(x)\to c_0$ as $x\to0^+$, so that $q(y/s)\to c_0$ for every fixed $y>0$ as $s\to\infty$. Moreover, $q$ is bounded on $(0,\infty)$: it is continuous with a finite limit at the origin, and for large $x$ the decay $K_\nu(z)\sim\sqrt{\pi/(2z)}\,e^{-z}$ \cite[\S10.40(i)]{DLMF} makes $f_I(x)x^{1-g}$ vanish as $x\to\infty$. Hence the integrand is dominated by $\sup q\cdot e^{-y}y^{g-1}$, which is integrable. By dominated convergence,
\[
\lim_{s\to\infty}s^g\mathcal L_I(s)=c_0\int_0^\infty e^{-y}y^{g-1}dy=c_0\Gamma(g),
\]
which is \eqref{eq:laplace_asympt_final} with $c_\infty=c_0\Gamma(g)$. $\blacksquare$

\emph{Remark.} For $\alpha=\beta$, $K_0(z)\sim-\ln z$ as $z\to0^+$, which adds a logarithmic factor; this case is excluded.
\end{bl}

\section{\blue{Proof of Lemma~\ref{lem:avg_asympt}}}
\label{app:proof_avg_asympt}

\begin{bl}
Write $g=g_2$, $h=h_{\ell,2}$, $F=F_2$, $\mathcal L=\mathcal L_{I_2}$ and $\Phi(\Theta)=\mathbb E[\vphi(\Theta\eta_2)]$. By Proposition~\ref{prop:series_laplace},
\begin{align*}
\Phi(\Theta)&=\frac1{2F}\sum_{k\ge1}b_k\mathcal L(kh\Theta)-\tau_2^{\vphi}(\Theta),\\
0&\le\tau_2^{\vphi}(\Theta)\le\frac{1-F}{2F}e^{-\Theta}.
\end{align*}
Multiplying by $(h\Theta)^g$ and noting that $\Theta^ge^{-\Theta}\to0$, it suffices to show
\[
(h\Theta)^g\sum_{k\ge1}b_k\mathcal L(kh\Theta)\;\longrightarrow\;c_\infty\sum_{k\ge1}b_kk^{-g},\qquad\Theta\to\infty .
\]
Each term converges: by Lemma~\ref{lem:laplace_asympt}, for each fixed $k$, $(h\Theta)^gb_k\mathcal L(kh\Theta)=b_kk^{-g}(kh\Theta)^g\mathcal L(kh\Theta)\to b_kk^{-g}c_\infty$.

The terms are also uniformly dominated: by Lemma~\ref{lem:laplace_asympt}, there are $s_0,C>0$ with $\mathcal L(s)\le Cs^{-g}$ for $s\ge s_0$, and $\mathcal L(s)\le1\le s_0^gs^{-g}$ for $s<s_0$. Hence $\mathcal L(s)\le C's^{-g}$ for all $s>0$ with $C'=\max\{C,s_0^g\}$, and $(h\Theta)^gb_k\mathcal L(kh\Theta)\le C'b_kk^{-g}\le C'b_k$, which is summable since $\sum_kb_k=1$.

Dominated convergence for series gives the limit, and therefore $(h\Theta)^g\Phi(\Theta)\to\frac{c_\infty}{2F}\sum_kb_kk^{-g}=K/F$. The equality of the two expressions for $K$ in \eqref{eq:K_def} follows from Tonelli's theorem and $\int_0^\infty u^{g-1}e^{-ku}du=\Gamma(g)k^{-g}$, together with $c_\infty=c_0\Gamma(g)$; the integral defining $K$ is finite because $\vphi\le\frac12$ and $\vphi(u)\le\frac12e^{-u}$. $\blacksquare$
\end{bl}

\section{\blue{Proof of Theorem~\ref{thm:asymptotics}}}
\label{app:proof_asymptotics}

\begin{bl}
Write $N=N_2$, $g=g_2$, $h=h_{\ell,2}$, $A(\Theta)=\mathbb E[\vphi(\Theta\eta_2)]$, $P(N,\zeta)=\bar P_{e,2}^{\Hel}(N,\zeta)$ and $P^\star(N)=\min_\zeta P(N,\zeta)$. By Lemma~\ref{lem:avg_asympt}, $A(\Theta)\sim\Kb(h\Theta)^{-g}$ as $\Theta\to\infty$. For $\zeta=0$ the states are pure (Lemma~\ref{lem:output}), so
\begin{equation}
P(N,0)=A(4N)\sim\Kb(4hN)^{-g}.
\label{eq:step1_coh}
\end{equation}

 Fix $\epsilon\in[0,1)$ and $\delta\in(0,1)$. For every $\zeta\ge\epsilon$ and $\eta\in(0,1)$, the lower bound in \eqref{eq:FvG}, the bound $\Th\le4\eta\gamma_2^2/(1-\eta)$ of \eqref{eq:Thbounds} with $\gamma_2^2=(1-\zeta)N\le(1-\epsilon)N$, and the monotonicity of $\vphi$ give
\[
P_{e,2}^{\Hel}(\eta;N,\zeta)\ge\vphi\Big(\frac{4(1-\epsilon)N\eta}{1-\eta}\Big).
\]
For $\eta<\delta$ one has $\eta/(1-\eta)\le\eta/(1-\delta)$, so with $\Theta_{\epsilon,\delta}=4(1-\epsilon)N/(1-\delta)$ the right-hand side is at least $\vphi(\Theta_{\epsilon,\delta}\eta)$. For $\eta\ge\delta$ we simply use that $\vphi(\Theta_{\epsilon,\delta}\eta)\le\frac12e^{-\Theta_{\epsilon,\delta}\delta}$. Averaging over $\eta_2$ therefore yields, for all $\zeta\ge\epsilon$, 
\begin{equation}
P(N,\zeta)\ge A(\Theta_{\epsilon,\delta})-\tfrac12e^{-\Theta_{\epsilon,\delta}\delta}.
\label{eq:step3_L}
\end{equation}
The right-hand side does not depend on $\zeta$, and by Lemma~\ref{lem:avg_asympt} it is asymptotically equal to $\Kb\big(4h(1-\epsilon)N/(1-\delta)\big)^{-g}$, because the exponential term decays faster than any power of $N$. Dividing by \eqref{eq:step1_coh}, for every $\zeta\ge\epsilon$,
\begin{equation}
\frac{P(N,\zeta)}{P(N,0)}\ge\Big(\frac{1-\delta}{1-\epsilon}\Big)^{g}-o(1),
\label{eq:ratio_lb}
\end{equation}
where $o(1)\to0$ as $N\to\infty$ uniformly in $\zeta$.

\emph{Part (a).} Take $\epsilon=0$, so that \eqref{eq:ratio_lb} holds for all $\zeta$, in particular for a minimizer. Since $\zeta=0$ is admissible, $P^\star(N)\le P(N,0)$, hence
\[
(1-\delta)^g-o(1)\le\frac{P^\star(N)}{P(N,0)}\le1 .
\]
As $\delta\in(0,1)$ is arbitrary, $P^\star(N)/P(N,0)\to1$, and \eqref{eq:step1_coh} gives $P^\star(N)\sim P(N,0)\sim\Kb(4hN)^{-g}$.

\emph{Part (b).} Fix $\epsilon\in(0,1)$ and take $\delta=\epsilon/2$ in \eqref{eq:ratio_lb}. The constant $\big((1-\epsilon/2)/(1-\epsilon)\big)^g$ is strictly larger than $1$, so there is $N_\epsilon$ such that $P(N,\zeta)>P(N,0)$ for all $\zeta\ge\epsilon$ and $N\ge N_\epsilon$. Hence no minimizer lies in $[\epsilon,1]$, i.e., $\zeta^\star(N)<\epsilon$ for $N\ge N_\epsilon$. Since $\epsilon$ is arbitrary, $\zeta^\star(N)\to0$. $\blacksquare$
\end{bl}

\section{\blue{Proof of Proposition~\ref{prop:interior}}}
\label{app:proof_interior}

\begin{bl}
Write $P_1=\bar P_{e,1}^{\Hel}$, $B=1-2P_1$, $P_2(t)=\bar P_{e,2}^\star(t)$ and $D(t)=\frac12-P_2(t)$. Since $\vphi(u)<\frac12$ for $u>0$ and $\eta_1>0$ almost surely, $P_1<\frac12$, so $B\in(0,1]$, and $P_1$ does not depend on $t$. Using $1-P_1=\frac12+\frac B2$, \eqref{eq:R2_affine_final} becomes
\begin{equation}
R_2^\star(t)=\frac{1-t}{2}\Big(\frac12+B\,D(t)\Big),\qquad D(t)\in\Big[0,\frac12\Big],
\label{eq:R2_D}
\end{equation}
where $D\ge0$ because $\zeta=0$ gives pure states with error below $\frac12$. It therefore suffices to study $D(t)$.

\emph{Continuity.} The map $(t,\zeta)\mapsto(\gamma_2,r)$ is continuous, so by the continuity argument of Appendix~\ref{app:proof_allocation} the conditional error is jointly continuous in $(t,\zeta)$ and bounded by $\frac12$. By dominated convergence, $\bar P_{e,2}^{\Hel}(t,\zeta)$ is jointly continuous, and the minimum over the compact set $\zeta\in[0,1]$ of a jointly continuous function is continuous in $t$ (Berge's maximum theorem). Hence $P_2$, $D$ and $R_2^\star$ are continuous on $(0,1)$.

\emph{Endpoints.} Since $\frac12+BD\le1$, \eqref{eq:R2_D} gives $R_2^\star(t)\le\frac{1-t}{2}\to0$ as $t\to1^-$. As $t\to0^+$, \eqref{eq:dataproc} gives $P_2(t)\ge\vphi(\Theta_{\mathrm{in}})$ with $\Theta_{\mathrm{in}}=4(1-\zeta)N_2e^{2r}\le4N_2(\sqrt{N_2}+\sqrt{1+N_2})^2$, because $e^r=\sqrt{\zeta N_2}+\sqrt{1+\zeta N_2}$. This bound tends to $0$ with $N_2$, so $\vphi(\Theta_{\mathrm{in}})\to\frac12$, $D(t)\to0$, and $R_2^\star(t)\to\frac14$.

\emph{Excess over $\frac14$.} Let $N_2\le\frac14$, so that $4N_2\eta_2<1$. Since $1-e^{-u}\ge u/2$ on $[0,1]$, $\vphi(u)\le\frac12(1-\sqrt{u/2})$ there. As $\zeta=0$ is admissible, $P_2(t)\le\mathbb E[\vphi(4N_2\eta_2)]$, hence
\[
D(t)\ge\tfrac12\mathbb E\big[\sqrt{2N_2\eta_2}\big]=c\sqrt{N_2},\qquad c=\frac{\mathbb E[\sqrt{\eta_2}]}{\sqrt2}>0 .
\]
By \eqref{eq:R2_D}, $R_2^\star(t)>\frac14$ if and only if $BD(t)>\frac{t}{2(1-t)}$. Since $N_2=\Ct\,t/(1-t)$, we have $\frac{t}{1-t}=N_2/\Ct$, and it suffices that $Bc\sqrt{N_2}>N_2/(2\Ct)$, i.e., $\sqrt{N_2}<2\Ct Bc$. This holds for all sufficiently small $t$.

Finally, extend $R_2^\star$ to $[0,1]$ by $R_2^\star(0)=\frac14$ and $R_2^\star(1)=0$. By the continuity and the endpoint limits, the extension is continuous on the compact interval $[0,1]$, so it attains its maximum. This maximum exceeds $\frac14$ by the previous paragraph, hence it is not attained at $0$ or $1$, and it is attained at an interior point $t^\star\in(0,1)$. $\blacksquare$
\end{bl}

\section{\blue{Proof of Proposition~\ref{prop:bound}}}
\label{app:proof_bound}

\begin{bl}
We first show that the lossless partition maximizes $\Theta_{\mathrm{in}}$. Let $s=\sinh r$, so that $\gamma_2^2=N_2-s^2$ and $\Theta_{\mathrm{in}}/4=(N_2-\sinh^2r)e^{2r}$ for $r\in[0,\operatorname{arsinh}\sqrt{N_2}]$. Then
\begin{align*}
\frac{d}{dr}\Big[(N_2-\sinh^2r)e^{2r}\Big]&=2e^{2r}\big[N_2-\sinh^2r-\sinh r\cosh r\big]\\
&=2e^{2r}\Big[N_2-\frac{e^{2r}-1}{2}\Big],
\end{align*}
since $\sinh^2r+\sinh r\cosh r=\sinh r\,e^{r}=(e^{2r}-1)/2$. The bracket is decreasing in $r$ and vanishes at $e^{2r}=2N_2+1$, so $\Theta_{\mathrm{in}}$ is maximized there. At this point $\sinh^2r=(e^{2r}+e^{-2r}-2)/4$ gives $\zeta=\sinh^2r/N_2=N_2/(2N_2+1)=\zeta_{\mathrm L}$, $\gamma_2^2=N_2(N_2+1)/(2N_2+1)$ and $\Theta_{\mathrm{in}}=4N_2(N_2+1)$, as in \eqref{eq:zeta_lossless}.

For part (i), by \eqref{eq:cp_lower}, $P_{e,2}^{\mathrm{cp}}(\eta;t,\zeta)\le P_{e,2}^{\Hel}(\eta;t,\zeta)$ for all $\eta,\zeta$. Averaging preserves the inequality, so $\bar P_{e,2}^{\mathrm{cp}}(t,\zeta)\le\bar P_{e,2}^{\Hel}(t,\zeta)$ for every $\zeta$, and taking minima over $\zeta$ on both sides gives $\min_\zeta\bar P_{e,2}^{\mathrm{cp}}(t,\zeta)\le\bar P_{e,2}^\star(t)$. Since $R_2$ in \eqref{eq:R2_affine_final} is strictly decreasing in $\bar P_{e,2}$ (the coefficient $1-2\bar P_{e,1}^{\Hel}$ is positive), $R_2^{\mathrm{cp}}(t)\ge R_2^{\star}(t)$ for every $t$, and the maxima over $t$ are ordered likewise.

For part (ii), from \eqref{eq:R2_affine_final},
\[
R_2^{\mathrm{cp}}-R_2^{\star}=\frac{1-t}{2}\big(1-2\bar P_{e,1}^{\Hel}\big)\Big(\bar P_{e,2}^\star-\min_\zeta\bar P_{e,2}^{\mathrm{cp}}\Big).
\]
The CP error $\bar P_{e,2}^{\mathrm{cp}}(t,\zeta)=\mathbb E[\vphi(\eta_2\Theta_{\mathrm{in}})]$ is decreasing in $\Theta_{\mathrm{in}}$, which is maximal at $\zeta_{\mathrm L}$ by the computation above. By Lemma~\ref{lem:avg_asympt}, $\min_\zeta\bar P_{e,2}^{\mathrm{cp}}=\mathbb E[\vphi(4N_2(N_2+1)\eta_2)]\sim\Kb(4hN_2(N_2+1))^{-g_2}=o(N_2^{-g_2})$, while $\bar P_{e,2}^\star\sim\Kb(4hN_2)^{-g_2}$ by Theorem~\ref{thm:asymptotics}(a). The difference is therefore asymptotically equivalent to $\Kb(4hN_2)^{-g_2}$, which proves (ii). $\blacksquare$
\end{bl}

\balance


\begin{thebibliography}{00}

\bibitem{KhalfetQCNC2026}
N.~Khalfet and I.~Krikidis,
``Time-Switching Quantum Relay-Assisted Free-Space Optical Communication,''
in \emph{Proc. IEEE QCNC}, Kobe, Japan, 2026, pp. 529-536. 

\bibitem{b1}
L.~Gyongyosi, S.~Imre, and H.~V.~Nguyen,
``A survey on quantum channel capacities,''
\emph{IEEE Commun. Surveys Tuts.}, vol.~20, no.~2, pp.~1149--1205, Second Quart., 2018.

\bibitem{Hanzo2025}
L. Hanzo, Z. Babar, Z. Cai, D. Chandra, I. B. Djordjevic, 
B. Koczor, S. X. Ng, M. Razavi, and O. Simeone,
``Quantum information processing, sensing, and communications:
Their myths, realities, and futures,''
\emph{Proc. IEEE}, vol.~113, no.~2, pp.~222--290, Feb.~2025.



\bibitem{SchumacherWestmoreland}
B.~Schumacher and M.~D.~Westmoreland,
``Sending classical information via noisy quantum channels,''
\emph{Phys. Rev. A}, vol.~56, no.~1, pp.~131--138, Jul.~1997.

\bibitem{HelstromBook}
C. W. Helstrom,
\emph{Quantum Detection and Estimation Theory}.
New York, NY, USA: Academic Press, 1976.

\bibitem{b14}
S.~Lloyd,
``Capacity of the noisy quantum channel,''
\emph{Phys. Rev. A}, vol.~55, no.~3, pp.~1613--1622, Mar.~1997.




\bibitem{Pirandola2017}
S.~Pirandola, R.~Laurenza, C.~Ottaviani, and L.~Banchi,
``Fundamental limits of repeaterless quantum communications,''
\emph{Nat. Commun.}, vol.~8, p.~15043, Mar.~2017.

\bibitem{Braunstein2005}
S.~L.~Braunstein and P.~van~Loock,
``Quantum information with continuous variables,''
\emph{Rev. Mod. Phys.}, vol.~77, no.~2, pp.~513--577, 2005.


\bibitem{Weedbrook2012}
C. Weedbrook, S. Pirandola, R. García-Patrón, N. J. Cerf,
T. C. Ralph, J. H. Shapiro, and S. Lloyd,
``Gaussian Quantum Information,''
\emph{Rev. Mod. Phys.}, vol. 84, no. 2, pp. 621--669, 2012.



\bibitem{Chesi2018}
G. Chesi, S. Olivares, and M. G. A. Paris,
``Squeezing-Enhanced Phase-Shift-Keyed Binary Communication in Noisy Channels,''
\emph{Phys. Rev. A}, vol. 97, no. 3, p. 032315, 2018.

\bibitem{Takeoka2008}
M. Takeoka and M. Sasaki,
``Discrimination of the Binary Coherent Signal:
Gaussian-Operation Limit and Simple Non-Gaussian Near-Optimal Receivers,''
\emph{Phys. Rev. A}, vol. 78, no. 2, p. 022320, 2008.


\bibitem{FSO}
S. Pirandola,
``Limits and Security of Free-Space Quantum Communications,''
\emph{Phys. Rev. Research}, vol. 3, no. 1, p. 013279, Mar. 2021.

\bibitem{AlHabash2001}
M.~A.~Al-Habash, L.~C.~Andrews, and R.~L.~Phillips,
``Mathematical model for the irradiance probability density function of a laser beam propagating through turbulent media,''
\emph{Opt. Eng.}, vol.~40, no.~8, pp.~1554--1562, 2001.

\bibitem{Andrews2005}
L.~C.~Andrews and R.~L.~Phillips, 
\emph{Laser Beam Propagation through Random Media}, 2nd~ed. 
Bellingham, WA, USA: SPIE Press, 2005.


\bibitem{BhatnagarRelay}
M.~R.~Bhatnagar, M.~K.~Arti, and K.~Bhatnagar,
``Quantum relay-assisted free-space optical communication,''
\emph{IEEE Photon. J.}, vol.~17, no.~3, pp.~1--15, Jun.~2025.

\bibitem{Ulukus2019}
\blue{S.~Ulukus, A.~Yener, E.~Erkip, O.~Simeone, M.~Zorzi, P.~Grover, and K.~Huang, ``Energy harvesting wireless communications: A review of recent advances,'' \emph{IEEE J. Sel. Areas Commun.}, vol.~33, no.~3, pp. 360--381, Mar. 2015.}

\bibitem{Varshney2008}
L.~R.~Varshney,
``Transporting information and energy simultaneously,''
\emph{Proc. IEEE ISIT}, pp.~1612--1616, 2008.

\bibitem{Nasir2013}
A. A. Nasir, X. Zhou, S. Durrani, and R. A. Kennedy,
``Relaying Protocols for Wireless Energy Harvesting and Information Processing,''
\emph{IEEE Trans. Wireless Commun.}, vol. 12, no. 7, pp. 3622--3636, Jul. 2013.

\bibitem{KrikidisTesla}
I.~Krikidis,
``Tesla meets Helstrom: A wireless-powered quantum optical system,''
\emph{IEEE Wireless Commun. Lett.}, vol.~14, no.~12, pp.~3957--3961, Dec.~2025.

\bibitem{Paris2001}
\blue{M.~G.~A.~Paris,
``Nearly ideal binary communication in squeezed channels,''
\emph{Phys. Rev. A}, vol.~64, no.~1, p.~014304, 2001.}

\bibitem{Walsh2025}
\blue{A.~Walsh \emph{et al.},
``All-Gaussian state discrimination beyond the coherent Helstrom bound,''
\emph{arXiv:2510.20096}, 2025.}

\bibitem{SemenovVogel2009}
\blue{A.~A.~Semenov and W.~Vogel,
``Quantum light in the turbulent atmosphere,''
\emph{Phys. Rev. A}, vol.~80, no.~2, p.~021802(R), 2009.}

\bibitem{Vasylyev2016}
\blue{D.~Vasylyev, A.~A.~Semenov, and W.~Vogel,
``Atmospheric quantum channels with weak and strong turbulence,''
\emph{Phys. Rev. Lett.}, vol.~117, no.~9, p.~090501, 2016.}

\bibitem{Vasylyev2018}
\blue{D.~Vasylyev, W.~Vogel, and A.~A.~Semenov,
``Theory of atmospheric quantum channels based on the law of total probability,''
\emph{Phys. Rev. A}, vol.~97, no.~6, p.~063852, Jun.~2018.}


\bibitem{Audenaert2007}
\blue{K.~M.~R.~Audenaert, J.~Calsamiglia, R.~Mu{\~n}oz-Tapia, E.~Bagan, Ll.~Masanes, A.~Acin, and F.~Verstraete,
``Discriminating states: The quantum {C}hernoff bound,''
\emph{Phys. Rev. Lett.}, vol.~98, no.~16, p.~160501, 2007.}




\bibitem{Calsamiglia2008}
\blue{J.~Calsamiglia, R.~Mu{\~n}oz-Tapia, Ll.~Masanes, A.~Acin, and E.~Bagan,
``Quantum {C}hernoff bound as a measure of distinguishability between density matrices: Application to qubit and {G}aussian states,''
\emph{Phys. Rev. A}, vol.~77, no.~3, p.~032311, 2008.}


\bibitem{Pirandola2008}
\blue{S.~Pirandola and S.~Lloyd,
``Computable bounds for the discrimination of {G}aussian states,''
\emph{Phys. Rev. A}, vol.~78, no.~1, p.~012331, 2008.}

\bibitem{Tan2008}
S.~H.~Tan, B.~I.~Erkmen, V.~Giovannetti, S.~Guha, S.~Lloyd, 
L.~Maccone, S.~Pirandola, and J.~H.~Shapiro,
``Quantum illumination with Gaussian states,''
\emph{Phys. Rev. Lett.}, vol.~101, no.~25, p.~253601, 2008.



\bibitem{Serafini2017}
\blue{A.~Serafini,
\emph{Quantum Continuous Variables: A Primer of Theoretical Methods}.
Boca Raton, FL, USA: CRC Press, 2017.}

\bibitem{Shapiro2009}
J.~H.~Shapiro, “The quantum theory of optical communications,” 
\emph{IEEE J. Sel. Topics Quantum Electron.}, vol.~15, no.~6, 
pp.~1547--1569, Nov./Dec. 2009.

\bibitem{Glauber1963}
R.~J.~Glauber, 
“The quantum theory of optical coherence,” 
\emph{Phys. Rev.}, vol.~130, no.~6, pp.~2529--2539, Jun.~1963.

\bibitem{Yuen1980}
H.~P.~Yuen and J.~H.~Shapiro,
“Optical communication with two-photon coherent states---Part I:
Quantum-state propagation and quantum-noise reduction,”
\emph{IEEE Trans. Inf. Theory},
vol.~24, no.~6, pp.~657--668, Nov.~1978.

\bibitem{Clerckx2019}
B.~Clerckx, R.~Zhang, R.~Schober, D.~W.~K.~Ng, D.~I.~Kim, and H.~V.~Poor,
``Fundamentals of Wireless Information and Power Transfer: From RF Energy Harvester Models to Signal and System Designs,''
\textit{IEEE J. Sel. Areas Commun.}, vol.~37, no.~1, pp.~4--33, Jan.~2019.

\bibitem{Kennedy1973}
R.~S.~Kennedy,
``A near-optimum receiver for the binary coherent state quantum channel,''
\emph{MIT Res. Lab. Electron., Quarterly Progress Report}, no.~108, pp.~219--225, 1973.

\bibitem{Dolinar1973}
S.~J.~Dolinar,
``An optimum receiver for the binary coherent state quantum channel,''
\emph{MIT Res. Lab. Electron., Quart. Prog. Rep.}, no.~111, pp.~115--120, 1973.

\bibitem{Uhlmann1976}
\blue{A.~Uhlmann, ``The `transition probability' in the state space of a $*$-algebra,'' \emph{Rep. Math. Phys.}, vol.~9, no.~2, pp. 273--279, 1976.}

\bibitem{Jozsa1994}
\blue{R.~Jozsa, ``Fidelity for mixed quantum states,'' \emph{J. Mod. Opt.}, vol.~41, no.~12, pp. 2315--2323, 1994.}

\bibitem{DLMF}
\blue{F.~W.~J.~Olver \emph{et al.}, eds.,
\emph{NIST Digital Library of Mathematical Functions}, Release 1.2, 2024. [Online]. Available: https://dlmf.nist.gov/}

\bibitem{Scutaru1998}
\blue{H.~Scutaru,
``Fidelity for displaced squeezed thermal states and the oscillator semigroup,''
\emph{J. Phys. A: Math. Gen.}, vol.~31, no.~15, pp.~3659--3663, 1998.}

\bibitem{ParaoanuScutaru1998}
\blue{G.-S.~Paraoanu and H.~Scutaru,
``Bures distance between two displaced thermal states,''
\emph{Phys. Rev. A}, vol.~58, no.~2, pp.~869--874, 1998.}

\bibitem{FuchsVanDeGraaf1999}
\blue{C.~A.~Fuchs and J.~van~de~Graaf,
``Cryptographic distinguishability measures for quantum-mechanical states,''
\emph{IEEE Trans. Inf. Theory}, vol.~45, no.~4, pp.~1216--1227, May~1999.}

\bibitem{NielsenChuang2010}
\blue{M.~A.~Nielsen and I.~L.~Chuang,
\emph{Quantum Computation and Quantum Information}, 10th anniversary ed.
Cambridge, U.K.: Cambridge Univ. Press, 2010.}

\bibitem{Holevo2012}
A.~S.~Holevo, \emph{Quantum Systems, Channels, Information: A Mathematical Introduction}. 
Berlin, Germany: De Gruyter, 2012.

\end{thebibliography}
\end{document}